\newtheorem{theorem}{Theorem}
\begin{document}

\def\ket#1{\left|#1\right\rangle}
\def\bra#1{\langle#1|}
\newcommand{\ketbra}[2]{|#1\rangle\!\langle#2|}
\newcommand{\braket}[2]{\langle#1|#2\rangle}
\newcommand{\prob}[1]{{\rm Pr}\left(#1 \right)}
\newcommand{\expect}[2]{{\mathbb{E}_{#2}}\!\left\{#1 \right\}}
\newcommand{\var}[2]{{\mathbb{V}_{#2}}\!\left\{#1 \right\}}
\newcommand{\Ad}{\operatorname{Ad}}
\newcommand{\ad}{\operatorname{ad}}
\newcommand{\plog}[1]{\operatorname{polylog}{#1}}

\newcommand{\sinc}{\operatorname{sinc}}
\newcommand{\ghl}[1]{{\color{red}GHL {#1}}}

\newcommand{\sde}{\mathrm{sde}}
\newcommand{\Z}{\mathbb{Z}}
\newcommand{\RR}{\mathbb{R}}
\newcommand{\w}{\omega}
\newcommand{\Kap}{\kappa}

\newcommand{\Tchar}{$T$}
\newcommand{\T}{\Tchar~}
\newcommand{\TT}{\mathrm{T}}
\newcommand{\ClT}{\{{\rm Clifford}, \Tchar\}~}
\newcommand{\Tcount}{\Tchar--count~}
\newcommand{\Tcountper}{\Tchar--count}
\newcommand{\Tcounts}{\Tchar--counts~}
\newcommand{\Tdepth}{\Tchar--depth~}
\newcommand{\Zr}{\Z[i,1/\sqrt{2}]}
\newcommand{\ve}{\varepsilon}
\newcommand{\expectation}[1]{\mathop{\mathbb{E}}\left[#1\right]}
\newcommand{\ssection}[1]{{\textbf{#1 --}}}

\newcommand{\eq}[1]{\hyperref[eq:#1]{(\ref*{eq:#1})}}
\renewcommand{\sec}[1]{\hyperref[sec:#1]{Section~\ref*{sec:#1}}}
\newcommand{\app}[1]{\hyperref[app:#1]{Appendix~\ref*{app:#1}}}
\newcommand{\fig}[1]{\hyperref[fig:#1]{Figure~\ref*{fig:#1}}}
\newcommand{\thm}[1]{\hyperref[thm:#1]{Theorem~\ref*{thm:#1}}}
\newcommand{\lem}[1]{\hyperref[lem:#1]{Lemma~\ref*{lem:#1}}}
\newcommand{\tab}[1]{\hyperref[tab:#1]{Table~\ref*{tab:#1}}}
\newcommand{\cor}[1]{\hyperref[cor:#1]{Corollary~\ref*{cor:#1}}}
\newcommand{\alg}[1]{\hyperref[alg:#1]{Algorithm~\ref*{alg:#1}}}
\newcommand{\defn}[1]{\hyperref[def:#1]{Definition~\ref*{def:#1}}}

\newcommand{\targfix}{\qw {\xy {<0em,0em> \ar @{ - } +<.39em,0em>
\ar @{ - } -<.39em,0em> \ar @{ - } +
<0em,.39em> \ar @{ - }
-<0em,.39em>},<0em,0em>*{\rule{.01em}{.01em}}*+<.8em>\frm{o}
\endxy}}

\newenvironment{proofof}[1]{\begin{trivlist}\item[]{\flushleft\it
Proof of~#1.}}
{\qed\end{trivlist}}

\newcommand{\cu}[1]{{\textcolor{red}{#1}}}
\newcommand{\tout}[1]{{}}
\newcommand{\good}{{\rm good}}
\newcommand{\bad}{{\rm bad}}

\newcommand{\id}{\openone}

\title{Well-conditioned multiproduct {H}amiltonian simulation}
\author{Guang Hao Low}
\affiliation{Microsoft Quantum, Redmond WA, USA}
\author{Vadym Kliuchnikov}
\affiliation{Microsoft Quantum, Redmond WA, USA}
\author{Nathan Wiebe}
\affiliation{Microsoft Quantum, Redmond WA, USA}
\affiliation{Pacific Northwest National Laboratory, Richland WA, USA}
\affiliation{Department of Physics, University of Washington, Seattle, WA, USA}
\begin{abstract}
Product formula approximations of the time-evolution operator on quantum computers are of great interest due to their simplicity, and good scaling with system size by exploiting commutativity between Hamiltonian terms. 
However, product formulas exhibit poor scaling with the time $t$ and error $\epsilon$ of simulation as the gate cost of a single step scales exponentially with the order $m$ of accuracy.
We introduce well-conditioned multiproduct formulas, which are a linear combination of product formulas, where a single step has polynomial cost $\mathcal{O}(m^2\log{(m)})$ and succeeds  with probability $\Omega(1/\operatorname{log}^2{(m)})$.
Our multiproduct formulas imply a simple and generic simulation algorithm that simultaneously exploits commutativity in arbitrary systems and has a worst-case cost $\mathcal{O}(t\log^{2}{(t/\epsilon)})$ which is optimal up to poly-logarithmic factors. 
In contrast, prior Trotter and post-Trotter Hamiltonian simulation algorithms realize only one of these two desirable features.
A key technical result of independent interest is our solution to a conditioning problem in previous multiproduct formulas that amplified numerical errors by $e^{\Omega(m)}$ in the classical setting, and led to a vanishing success probability $e^{-\Omega(m)}$ in the quantum setting.
\end{abstract}
\maketitle






\ssection{Introduction}
Quantum computers promise to enable the efficient simulation of quantum Hamiltonian dynamics, which is, in general, intractable on classical computers. 
However, the quantum gate cost of simulating many important systems, such as quantum field theories~\cite{Jordan2012QuantumFieldTheory} and chemistry~\cite{Reiher2016Reaction}, is still prohibitive~\cite{Childs2017Speedup}.
As such, new techniques for digital Hamiltonian simulation remains a subject of intense research that has seen tremendous progress in recent years~\cite{Poulin2011timedependent,Berry2014Exponential,Berry2015Truncated,Low2016HamSim,Campbell2019Random,Low2018SpectralNorm}.
More, though, is required to realize technologically relevant applications within the foreseeable future.

State-of-art quantum algorithms for simulation may be broadly categorized as one of two complementary approaches: Lie-Trotter-Suzuki product formulas~\cite{Berry2007Efficient}, or Linear-Combination-of-Unitaries (LCU)~\cite{Low2016Qubitization}.
Both seek to approximate the unitary time-evolution operator of Schr\"{o}dinger's equation using the fewest number of primitive quantum gates.
However, they differ in asymptotic gate cost with respect to the time, error, and the size of the simulated system.
A dichotomy in this difference often makes the preferred method for any given situation unclear.

On one hand, the gate cost of LCU approaches is near-linear in time and logarithmic in error, which is essentially optimal according to no-fast-forwarding theorems~\cite{Berry2015Hamiltonian}.
In contrast, an order $2m$ product formula is more expensive by a super-logarithmic factor $(t/\epsilon)^{1/{2m}}$~\cite{Berry2007Efficient} of time and error.
As the cost of product formulas is exponential in the order $m$, such as the standard recursive construction by Suzuki~\cite{Trotter1959Product,Suzuki1990Fractal}, achieving a poly-logarithmic overhead is impossible, even by varying the order arbitrarily.

On the other hand, product formulas tend to scale significantly better with the size of typical physical systems due to the principle of locality.
A most dramatic separation in performance is observed in simulating strictly local interactions~\cite{Childs2017Speedup}.
Without any special modification, the gate cost of high-order product formulas scales almost-linearly like $\mathcal{O}(N^{1+o(1)})$ in system size~\cite{Childs2019Lattice}.
In contrast, all known LCU  approaches lose this desirable feature and exhibit quadratic scaling in general $\Omega(N^2)$~\cite{Haah2018quantum}.
Whereas the maximum stepsize of product formulas is limited by how well terms in the Hamiltonian commute, LCU approaches are unable to exploit commutation between Hamiltonian terms.
Similar advantages are observed in simulations with long-ranged interactions~\cite{Tran2019powerlaw} such as the coulomb potential~\cite{poulin2014trotter}, and systems with small algebras such as the quantum harmonic oscillator~\cite{Somma2016Trotter}.
\begin{figure}[b]
	\includegraphics[width=\columnwidth]{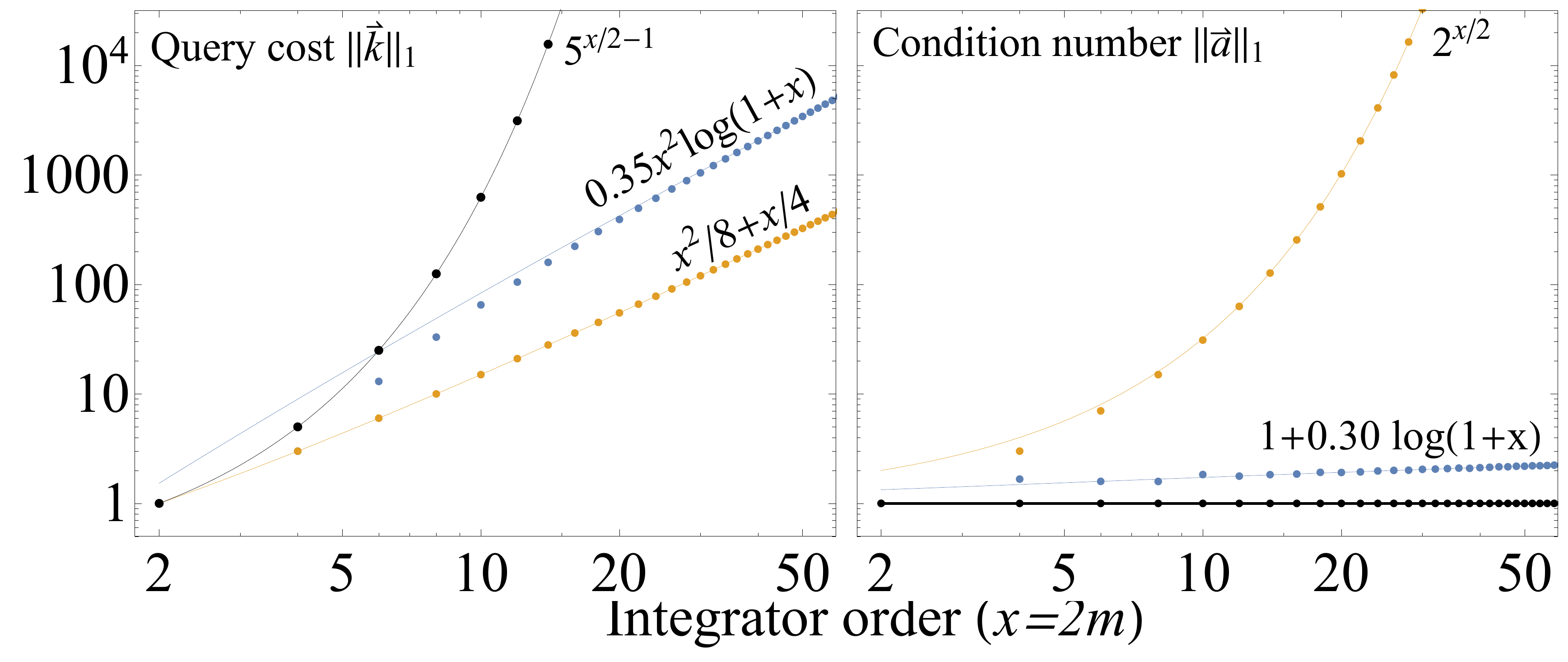}
	\caption{(left) Number of queries to a second-order product formula and (right) condition number for each step of an order $2m$ integrator. 
	(black) Trotter-Suzuki product formulas~\cite{Suzuki1990Fractal}~\cref{eq:TrotterSuzukiIntegrators}, or (yellow) multiproduct formulas by Chin~\cite{Chin2010Multiproduct} exhibit exponential scaling for at least one parameter,
	whereas (blue) our well-conditioned multiproduct formulas~\cref{eq:vandersoln,eq:roundedexponent}  combines the best properties of both.
	}
	\label{fig:algorithm}
\end{figure}

We present a simple algorithm for Hamiltonian simulation that combines the best properties of product formula and LCU approaches.
In addition to improved scaling with system size through an explicit dependence on commutators, our algorithm also matches the limits of no-fast-forwarding up to a logarithmic factor of time and error.  
The basic idea is to approximate high-order time-evolution by a linear combination of low-order product formulas, known in classical numerical techniques as a multiproduct formula~\cite{Chin2010Multiproduct}, which is a generalization of Richardson extrapolation.
Unlike the exponential cost of product formulas, we find families of order $2m$ multiproduct formula that can be realized with only $\mathcal{O}(m^2\log{(m)})$ queries to any symmetric product formula, and on a quantum computer, succeeds with a high probability $\Omega(\frac{1}{\log^2{(m)}})$ that is easily amplified using robust oblivious amplitude amplification~\cite{Berry2014Exponential}.
Thus the overhead $\mathcal{O}(m^2\log^2{(m)}(t/\epsilon)^{1/m})$ is made logarithmic by a simple optimization over the order.

Our key technical result is solving a conditioning problem in multiproduct formulas, which is of independent interest. 
This problem manifested as an exponentially small probability of success $e^{-\Omega(m)}$ in previous quantum implementations~\cite{Childs2012}, 
which arose from the exponentially precise cancellation of terms required by prior known high-order multiproduct formulas.
In contrast, the properties of our well-conditioned multiproduct formulas, illustrated in~\cref{fig:algorithm}, feature both a polynomial query cost and coefficients of logarithmic size, and are therefore numerically stable.

In the following, we outline the multiproduct conditioning problem in the context of Hamiltonian simulation.
This problem is then solved by our constructive proof that well-conditioned multiproduct formulas of arbitrary order exist, and moreover have an elegant closed-form description. 
We also provide an efficient numerical recipe for constructing optimally-conditioned instances of multiproduct formulas.
These multiproduct formulas let us prove our main claim of a Hamiltonian simulation algorithm that simultaneously has optimal scaling with respect to time and error, up to a logarithmic overhead, and also exploits commutativity of terms. 
We also validate our claims in a simple numerical benchmark depicted in~\cref{fig:errorPlot} of simulating the 1D Heisenberg model.

\ssection{The multiproduct conditioning problem} The Hamiltonian $H=\sum_{j=1}^N h_j$ of many physical systems is described by a sum of $N$ local terms. 
The first explicit quantum algorithm for approximating the unitary time-evolution operator $\overrightarrow{U}_1(\Delta)\approx e^{-iH\Delta}$ was by Lloyd~\cite{Lloyd1996universal}, and splits evolution by the whole into evolution by its parts, that is 
\begin{align}
\overrightarrow{U}_1(\Delta)=\overrightarrow{\prod}_{j=1}^N e^{-i h_j \Delta}= e^{-iH\Delta} + \mathcal{O}(\sum_{j<k}\|[h_j,h_k]\|\Delta^2)\nonumber,
\end{align}
where the ordering of terms $e^{-ih_1\Delta}e^{-ih_2\Delta}\cdots$ is indicated by the arrow `$\rightarrow$'.
As this decomposition is correct to first order, evolution for arbitrary long times $t$ is accomplished by applying $t/\Delta$ approximate segments, each comprising of $N$ exponentials, with the stepsize $\Delta$ chosen to control the overall accumulated error.
Importantly, the error term depends explicitly on pairwise commutators of Hamiltonian terms, which in turn limits the maximum stepsize.
Higher order-$\alpha$ integrators exist, such as the second-order symmetric product formula
\begin{align}
\label{eq:symmproductformula}
U_2(\Delta)=\overrightarrow{U}_1(\Delta/2)\cdot\overleftarrow{U}_1(\Delta/2)=  e^{-iH\Delta} + O(\Delta^3),
\end{align}
which is used in the recursion by Suzuki~\cite{Suzuki1990Fractal}
\begin{align}
\label{eq:TrotterSuzukiIntegrators}
U_{\alpha}(\Delta)&=U^2_{\alpha-2}(p_{\alpha}\Delta)\cdot U_{\alpha-2}((1-4p_{\alpha}\Delta)\cdot U^2_{\alpha-2}(p_{\alpha}\Delta),\nonumber\\
p_\alpha&=1/(4-4^{1/(\alpha-1)}).
\end{align}
However, these makes $5^{\alpha/2-1}$ queries to the base sequence $U_2$, and are thus impractical at high orders.

We instead focus on multiproduct formulas, where a higher order $2m$ integrator is constructed from a linear combination of any symmetric lower-order product formula
\begin{align}
\label{eq:MultiProduct}
U_{\vec{k}}(\Delta)=\sum^{M}_{j=1}a_{j} U_2^{k_j}\left(\frac{\Delta}{k_j}\right) = e^{-iH\Delta}+\mathcal{O}(\Delta^{2m+1}),
\end{align}
such as the second-order Trotter-Suzuki formula $U_2$.
Any symmetric product formula has a formal Baker-Campbell-Hausdorff (BCH) expansion $U_2(\Delta)= e^{-iH\Delta +  E_3\Delta^3 + E_5 \Delta^5+ \cdots}$ for some error operators $E_k$~\cite{Blanes2000Magnus}, which implies the Taylor expansion
\begin{align}
U^{k_j}_2(\Delta/{k_j}) &= e^{-iH\Delta} + \frac{\Delta^3}{{k_j}^2}\tilde{E}_3(\Delta) + \frac{\Delta^5}{{k_j}^4}\tilde{E}_5(\Delta) + \cdots,\nonumber
\end{align}
for some error operators $\tilde{E}_k(\Delta)$.
Thus all lower order error terms may be canceled by choices of coefficients $a_j$ that solve the $m\times M$ system of linear equations
\begin{equation}
\underbrace
{\begin{bmatrix}
	1 & 1&\cdots &1\\
	k_1^{-2} & k_2^{-2} & \cdots & k_M^{-2}\\
	\vdots & \vdots & \ddots &\vdots\\
	k_1^{-2m+2} & k_2^{-2m+2} & \cdots & k_M^{-2m+2}\\
	\end{bmatrix}}_
{V_{m,M}(\vec{k}^{-2})}
\underbrace{
	\begin{bmatrix}a_{1}\\a_{2}\\\vdots\\a_{M} \end{bmatrix}
}
_{
	\vec{a}
}=
\underbrace{
	\begin{bmatrix}1\\ 0 \\\vdots\\0 \end{bmatrix}
}
_
{\hat{e}_1}
.\label{eq:VanderMonde}
\end{equation}
The left-hand side of~\cref{eq:VanderMonde} is a Vandermonde matrix $V_{m,M}(\vec{k}^{-2})\in\mathbb{R}^{m\times M}$, where $\vec{k}^{-2}=[k_1^{-2},\cdots,k_{M}^{-2}]$.
In the square case $M=m$, this has the solution
\begin{equation}
a_j = \prod_{q=\{1,\ldots,m\}\setminus j} \frac{k_{j}^2}{k_j^2-k_q^2}=\prod_{q\neq j} \frac{1}{1-(k_q/k_j)^2}.\label{eq:vandersoln}
\end{equation}

As described by Chin~\cite{Chin2010Multiproduct}, it suffices to take the simplest choice of an arithmetic progression for the exponents $k_j=j$ with $M=m$.
By summing over the exponents, only $\|\vec{k}\|_1\in\mathcal{O}(m^2)$ queries to $U_2$ are required, which appears to be an exponential improvement over that of the Trotter-Suzuki integrators~\cref{eq:TrotterSuzukiIntegrators}.
Unfortunately, these multiproduct formulas are ill-conditioned, as reflected by the quantity we call the `condition number' $\|\vec{a}\|_1\in e^{\Omega(m)}$, which is exponentially large in the order $m$~\cite{Childs2012}.
This implies an extremely precise cancellation of terms in~\cref{eq:MultiProduct}, which amplifies any numerical error of the base sequence $U_2$ by a factor $\|\vec{a}\|_1$.

Within the quantum setting, standard linear-combination-of-unitaries techniques translate ill-conditioning into an exponentially small success probability $\|\vec{a}\|^{-2}_1$~\cite{Childs2012}.
Using the recently developed oblivious amplitude amplification technique~\cite{Berry2014Exponential}, this probability may be boosted close to unity, but still at high cost $\mathcal{O}(\|\vec{a}\|^{-1}_1)$.
This highlights the need for multiproduct formulas with small condition number.

\ssection{Solutions to the conditioning problem} 
We shed insight on the conditioning problem by considering the general under-determined setting where the exponents $k_j$ are arbitrary rather than an arithmetic sequence, and where the Vandermonde matrix is not necessarily square.
Our main technical result is an super-exponential reduction in the condition number, illustrated in~\cref{fig:algorithm} and formally stated by the following theorem.
\begin{theorem}[Well-conditioned multiproduct formulas]
	\label{Lem:ChebyshevIntegerMultiproduct}
	There exist order-$2m$ multiproduct formulas~\cref{eq:MultiProduct} with polynomial integer exponents $\|\vec{k}\|_1\in\mathcal{O}(m^2 \log{(m)})$ and logarithmic condition number $\|\vec{a}\|_1\in\mathcal{O}(\log m)$.
\end{theorem}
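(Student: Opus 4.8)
The plan is to first reduce the condition number $\|\vec{a}\|_1$ to a classical extremal quantity, and then engineer integer exponents that make this quantity logarithmic. In the square case $M=m$ the coefficients are given explicitly by \cref{eq:vandersoln}, and substituting $x_j=k_j^{-2}$ one recognizes $a_j=\prod_{q\neq j}x_q/(x_q-x_j)$ as the Lagrange extrapolation weight $L_j(0)$ for the nodes $\{x_j\}$ evaluated at the origin. Hence $\|\vec{a}\|_1=\sum_j|L_j(0)|$ is exactly the Lebesgue function at $x=0$ of polynomial extrapolation in the squared-stepsize variable $x=k^{-2}$. I would record two structural facts that organize the estimate: this quantity is invariant under a global rescaling $k_j\mapsto c\,k_j$, so only the ratios $k_q/k_j$ matter; and contributions from small exponents (large $x_j$, near $1$) are exponentially suppressed by the many factors $x_q/x_j\ll1$ coming from the large-exponent nodes, so the sum is dominated by the behaviour near $x=0$.

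Next I would pass to the under-determined setting $M>m$ flagged before the theorem statement, since this is where the extra freedom lives. Writing the system \cref{eq:VanderMonde} as $V_{m,M}(\vec{x})\,\vec{a}=\hat{e}_1$ with $x_j=k_j^{-2}$, linear-programming duality gives
\begin{equation}
\min_{V\vec{a}=\hat{e}_1}\|\vec{a}\|_1=\max\{\,p(0):\deg p\le m-1,\ |p(x_j)|\le1\ \forall j\,\},\nonumber
\end{equation}
so the smallest achievable condition number equals the largest value at the origin of a degree-$(m-1)$ polynomial bounded by unity on the nodes. This is a discrete Chebyshev problem: if the nodes cluster near $x=0$ with Chebyshev density, then boundedness on the nodes forces boundedness near the origin, and the endpoint Lebesgue estimate caps the extremal value at $\mathcal{O}(\log m)$. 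The task therefore becomes choosing integer exponents whose reciprocal squares realise this clustering.

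The construction I would use rounds a Chebyshev-type distribution to integers, as in \cref{eq:roundedexponent}. The key obstruction—and the step I expect to be hardest—is that integer exponents can cluster near $x=0$ (arbitrarily dense for large $k$) but never near $x=1$, where only $k=1,2,3,\dots$ are available and the points $1,1/4,1/9,\dots$ are forced apart; a faithful both-endpoint Chebyshev mesh would demand an overall scale $k_{\max}\in\Omega(m^2)$ and inflate $\|\vec{k}\|_1$ by a spurious extra factor of $m$. I would resolve this by using the under-determined slack to supply the missing upper-endpoint resolution through additional small-exponent nodes and a tailored choice of $\vec{a}$, rather than through node placement, so that only one-sided clustering near $x=0$ is needed and $\|\vec{k}\|_1\in\mathcal{O}(m^2\log m)$ is preserved. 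The remaining work is a perturbation argument: show that replacing the ideal Chebyshev nodes by their integer roundings displaces each node by less than a fraction of the local spacing, so the extremal value—and hence the condition number—stays $\mathcal{O}(\log m)$, while separately verifying that the rounded exponents are distinct and sum to $\mathcal{O}(m^2\log m)$.
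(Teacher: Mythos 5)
Your framing is sound and overlaps substantially with the paper's: the identification of $a_j$ in \cref{eq:vandersoln} with the Lagrange extrapolation weight $L_j(0)$ in the variable $x=k^{-2}$, the reduction of $\|\vec a\|_1$ to a Lebesgue-function-at-the-endpoint estimate for Chebyshev-distributed nodes, the scale invariance under $k_j\mapsto c\,k_j$, and the final round-and-perturb step are all present in the paper (which obtains the real-node weights in closed form, $a_j''=\frac{(-1)^{j+1}}{m}\cot\frac{\pi(2j-1)}{4m}$, via orthogonality of $T_{j-1}(2x-1)$ on the Chebyshev mesh $x_j=\sin^2\frac{\pi(2j-1)}{4m}$, rather than via your LP-duality characterization $\min\|\vec a\|_1=\max\{p(0):\deg p\le m-1,\ |p(x_j)|\le 1\}$, which is a correct and arguably cleaner dual viewpoint). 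You have also correctly diagnosed the one genuinely delicate point: a two-sided Chebyshev mesh in $x=1/k^2$ cannot be realized by integers near $x=1$ without inflating the overall scale and hence $\|\vec k\|_1$.

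The gap is that your resolution of this obstruction is asserted, not proved, and it is the entire content of the theorem. "Use the under-determined slack to supply the missing upper-endpoint resolution through additional small-exponent nodes and a tailored choice of $\vec a$" does not specify which nodes, which $\vec a$, or why the dual extremal value remains $\mathcal{O}(\log m)$ once the upper half of the mesh is distorted; LP duality only tells you that the \emph{optimal} $\|\vec a\|_1$ equals the extremal $p(0)$, so you must still exhibit either a feasible $\vec a$ with small one-norm or an upper bound on every admissible $p(0)$, and neither is done. The paper's fix is simpler and avoids the upper endpoint altogether: take the \emph{lower half} of a $2m$-point Chebyshev mesh, $x_j'=x_j^{(2m)}$ for $j\in[m]$ (a square system, no slack needed), and observe from the product form of \cref{eq:vandersoln} that deleting the nodes $x_{m+1}^{(2m)},\dots,x_{2m}^{(2m)}$ only removes factors $\bigl(1-x_j/x_q\bigr)^{-1}>1$, so $|a_j'^{(m)}|\le|a_j''^{(2m)}|$ and $\|\vec a'\|_1\in\mathcal{O}(\log m)$ is inherited for free. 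Since the retained nodes satisfy $x_j'\le 1/2$ and have consecutive exponent gaps $\Omega(1/m)$, a single scale factor $K=\Theta(m)$ in \cref{eq:roundedexponent} already yields distinct integers with $\|\vec k\|_1\in\mathcal{O}(m^2\log m)$. I would recommend replacing your speculative under-determined construction with this monotone deletion argument (or supplying an explicit dual certificate for your node set); your rounding/perturbation step then goes through essentially as in the paper.
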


Our strategy for proving~\cref{Lem:ChebyshevIntegerMultiproduct} is constructive.
First, we relax~\cref{eq:MultiProduct} to allow real-valued exponents $k''_j$ with coefficients $a''_j$, and find well-conditioned solutions in closed-form for arbitrary orders $2m$, through an elegant connection to Chebyshev polynomials.
Second, we modify these solutions to obtain exponents $k'_j$ with a larger gap between consecutive exponents $|k'_j-k'_{j+1}|\in\Omega(1/m)$.
Third, we scale and round the exponents $k'_j$ to unique integers $k_j$, and show that the condition number $\|\vec{a}\|_1$ changes by at most a constant factor. 

As the $U_2$ query complexity must be at least quadratic $\ge \frac{m}{2}(1+m)$ for any choice of $m$ unique integer exponents in~\cref{eq:MultiProduct}, and largest condition number is at most unity, our result in~\cref{Lem:ChebyshevIntegerMultiproduct} is also optimal up to at most logarithmic factors.
As a bonus, we also present a rational linear program of polynomial size whose solutions describe multiproduct formulas with optimal condition number.

\begin{proof}
Consider a set of $m$ polynomials $\{p_j(x)=\sum_{i=1}^{m}A_{j,i} x^{i-1}\}_{j=1}^m$ with coefficients represented as the square matrix $A\in\mathbb{R}^{m\times m}$, that are orthogonal
\begin{align}
\label{eq:OrthogonalPolynomial}
\langle \vec{p}_j, \vec{p}_k \rangle &= \sum^m_{i=1}p_j(x_i)p_k(x_i) =\delta_{jk}\langle\vec{p}_j, \vec{p}_j \rangle,\\
\vec{p}_j&=[p_j(x_1),p_j(x_2),\cdots,p_j(x_m)],\nonumber
\end{align}
over some discrete set of interpolation points $\vec{x}$, where $\delta_{jk}$ is the Kronecker delta function. 
Now, left-multiply the Vandermonde matrix by the polynomial coefficients $A$. The $j^{\mathrm{th}}$ row of the output satisfies
\begin{align}
\label{eq:PolynomialInterpolation}
(A\cdot V_{m,m}(\vec{x})\cdot\vec{a}'')_j
=(A\cdot\hat{e}_1)_j
=A_{j,1}
=\langle \vec{p}_j,\vec{a}''\rangle.
\end{align}
Using orthogonality~\cref{eq:OrthogonalPolynomial}, the above~\cref{eq:PolynomialInterpolation} is satisfied by the choice
$\vec{a}'' = \sum_{i=1}^{m}\frac{A_{i,1} \vec{p_{i}}}{\langle p_{i}, p_{i} \rangle}$.

We find that coefficients with desirable properties are described by the basis $p_j(x)\equiv T_{j-1}(2x-1)$ of Chebyshev polynomials $T_j(x)=\cos{(j \cos^{-1}(x))}$ which are orthogonal
$\langle p_{j}, p_{k} \rangle = \frac{m}{2}\delta_{jk}(1+\delta_{j1})$ with respect to the Chebyshev interpolation points 
\begin{align}
x_j^{(m)}=\sin^{2}\left(\frac{\pi(2j-1)}{4m}\right)=1/k_j''^{2}.
\end{align}
By substitution, the coefficients are given by 
\begin{align}
\label{Eq:ChebyshevMultiProductCoefficients}
a_j''^{(m)}= \frac{(-1)^{j+1}}{m}\cot{\left(\frac{\pi(2j-1)}{4m}\right)},\;\;j\in[m].
\end{align}
We drop the superscript ${(m)}$ indicating the multiproduct formula order $2m$ whenever the context is clear.
Thus we may bound the exponents $\|\vec{k}''\|_1\in\Theta(m\log{(m)})$ and condition number $\|\vec a''\|_1\in\Theta(\log{(m)})$.

An intermediate real-exponent solution that is important to obtaining our rounded integer-exponent solution drops the latter half of the Chebyshev interpolation points.
Choose $x_j'^{(m)}=1/k'^{2}_j=x_j^{(2m)}$, where $j\in[m]$.
Then from~\cref{eq:vandersoln},
	\begin{align}
	| a_j'^{(m)}|
	\le 
	\left|\prod_{q=m+1}^{2m} \frac{a_{j}'^{(m)}}{1-x_j^{(2m)}/x_{q}^{(2m)}}\right|
	=
	|a_j''^{(2m)}|,\;j \in [m],\nonumber
	\end{align}
	which follows from the monotonicity of ${x}_j$. 
Thus the exponents and condition number are also bounded by
	$\|\vec{a}'\|_1\in\mathcal{O}(\log{(m)})$
	and $\|\vec{k}'\|_1\in\Theta(m\log{(m)})$.

\begin{figure*}
	\includegraphics[width=0.666\columnwidth]{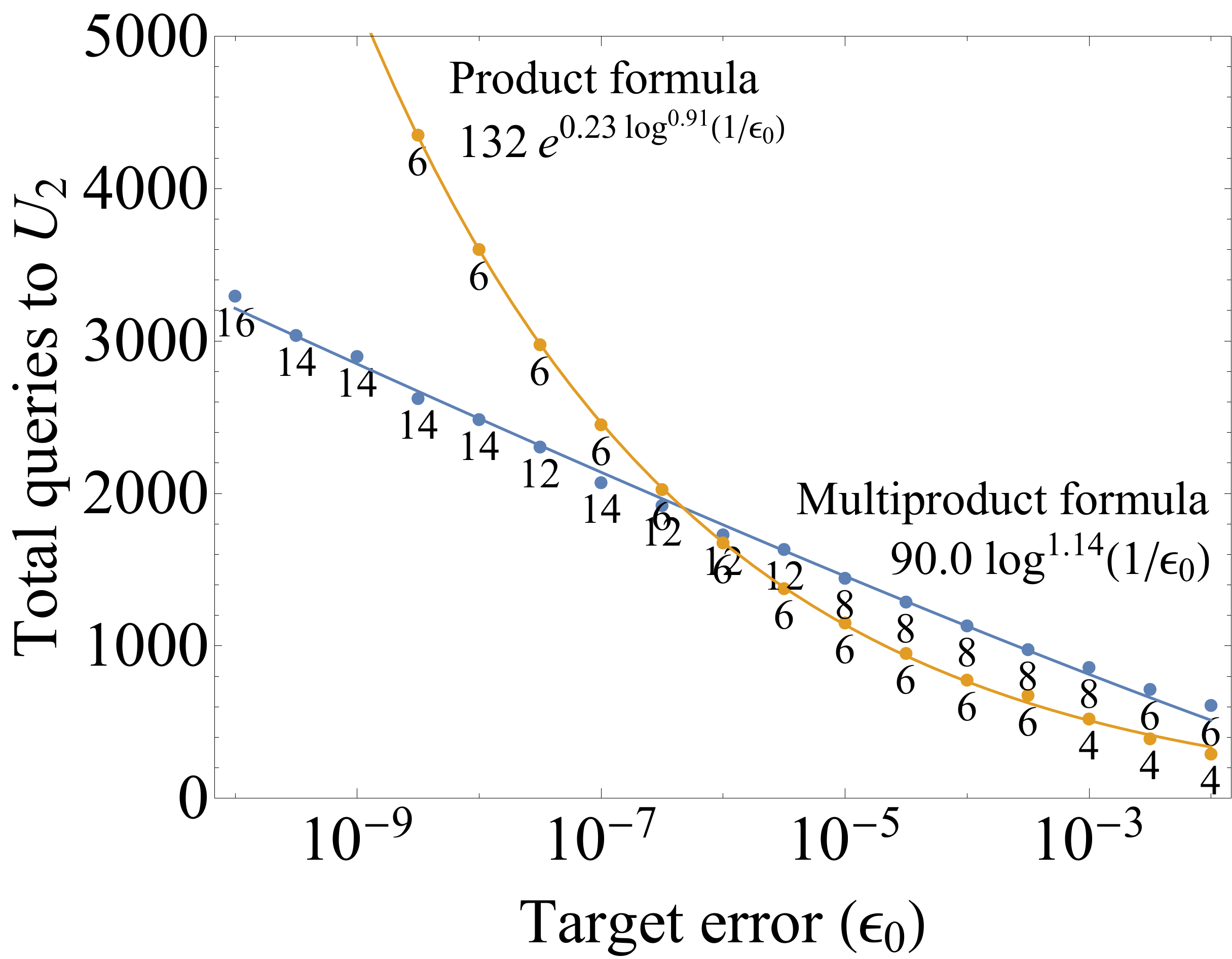}
	\includegraphics[width=1.33\columnwidth]{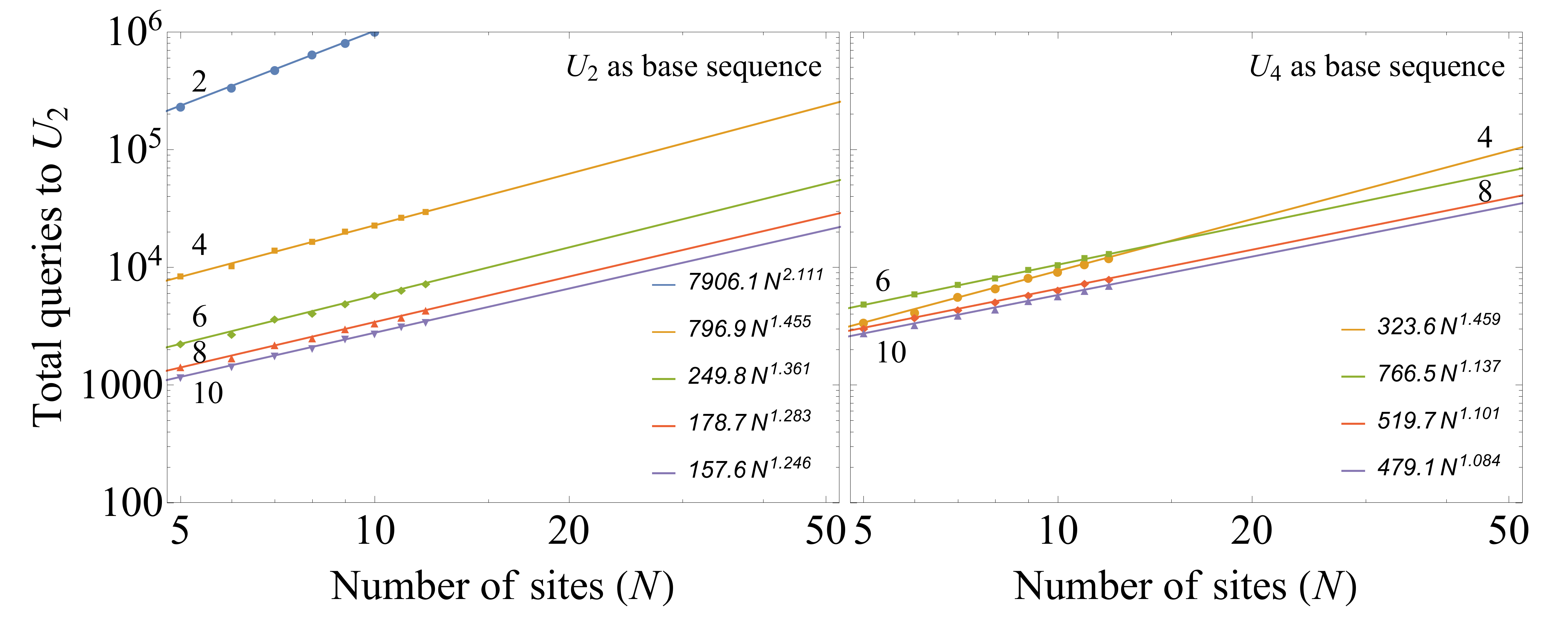}
	\caption{ Total $U_2$ query cost, including the cost of oblivious amplitude amplification, to simulate the Heisenberg chain. (Left) Simulation for time $t=N$ on $N=10$ sites as a function of error using Trotter-Suzuki product formulas (yellow), or multiproduct formulas a base sequence $U_2$ (blue). Each point is labeled by the order of the integrator that is used. 
		(Middle and right)
		Simulation for time $t=N$ with error $\epsilon=10^{-8}$ as a function of system size using multiproduct formulas with either $U_2$ or $U_4$ as the base sequence. Each line is labeled by the order of the applied multiproduct formula.
	}
	\label{fig:errorPlot}
\end{figure*}

Implementing fractional $U_2$ queries, though asymptotically efficient in principle~\cite{Gilyen2018singular}, can be impractical. 
Thus, we choose for an order $2m$ multiproduct formula the rounded exponents 
\begin{align}
\label{eq:roundedexponent}
k_j = \lceil K k'_j\rceil=\left\lceil K/\sqrt{x_j^{(2m)}}\right\rceil,\;\;j\in[m],
\end{align}
where the scale factor $K<\sqrt{8}m/\pi$ ensures rounding to unique integers, and implies $\|\vec{k}\|_1\in\mathcal{O}(m^2\log{(m)})$.

We now prove that the coefficients $a_j$ change by at most a multiplicative constant compared to $a'_j$.
As $K k'_j\in\Theta(m^2/j)\subseteq\Omega(m)$, the fractional shift $|k_j-K k'_j|/m\in\Theta(1/m)$ is small for large orders.
Thus the fractional change in $\gamma_q=(k'_q/k'_j)^2$ from Taylor's theorem is
\begin{align}
\left(\frac{k_{q}}{k_{j}}\right)^2=\left(\frac{k'^{2}_{q}}{k'^{2}_{j}}\right)^2\left(1+\Delta[q,j]\right)
,\;
|\Delta[q,j]|\in\Theta\left(\frac{|q-j|}{m^2}\right),\nonumber
\end{align}
where the sign of $\Delta[q,j]$ matches the sign of $(q-j)$.
As $\Delta[q,j]$ is also small, the shift in $a'_j$ to leading order is given by the derivatives $\frac{\partial a'_j}{\partial\gamma_q}=\frac{a'_j}{1-\gamma_q}$, following~\cref{eq:vandersoln}. Thus
\begin{align}
\label{eq:coefficient_shift}
\frac{|a'_j-a_j|}{|a_j|}&\in\Theta\left(\sum_{q\neq j}\frac{\Delta[q,j]}{1-\gamma_q}\right)
\subseteq\Theta\left(\sum_{q\neq j}\frac{x'_q|q-j|}{m^2|x'_q-x'_j|}\right)\nonumber
\\
&\subseteq\Theta(1).
\end{align}
We complete our proof by evaluating $\|\vec{k}\|_1$ using~\cref{eq:roundedexponent}, and noting that~\cref{eq:coefficient_shift} implies $\|\vec{a}\|_1\in \Theta(\|\vec{a}'\|_1)$.
\end{proof}

We can further optimize the condition number $\|\vec a\|_1$ by a numerical search over to all ${M}\choose{m}$ subsets of exponents $\{k_j\}_{j=1}^m\subseteq[M]$. 
This can be cast as an efficient linear program
\begin{align}
\min_{\vec{a}} \|\vec a\|_1\;\text{s.t.}\;{V_{m,M}(\vec{k}^{-2})}\cdot \vec{a}=\hat{e}_1\;\wedge\;k_j=j\in[M],
\label{eq:OptimizationProblem}
\end{align}
followed by minimizing with respect to $m\in[M]$. 
Analogous to sparse signal recovery~\cite{Rudelson2006Sparse}, one-norm minimization ensures that the solution $\vec{a}$ is sparse with exactly $m$ non-zero elements.
We tabulate the solutions to~\cref{eq:OptimizationProblem} in~\cref{sec:Tabulated_Results}.

\ssection{Hamiltonian simulation in the worst-case}
In the worst-case where all terms are maximally non-commutative, well-conditioned multiproduct formulas translate into Hamiltonian simulation algorithms that match no-fast-forwarding up to a logarithmic factors in time and error, stated formally in the following. 
\begin{theorem}[Hamiltonian simulation by well-conditioned multiproduct formulas]
	\label{Thm:HamSimByMPF}
	Time evolution can be approximated with error $\|U^r_{\vec{k}}(t/r) - e^{-itH}\|\le\epsilon\le1$, where $r=\Theta(t\lambda)$ and $\lambda=\sum_{j=1}^{N}\|h_j\|$, by a quantum circuit that succeeds with probability $1-\mathcal{O}(\epsilon)$ using
	$\mathcal{O}\left(t\lambda \log^{2}{(t\lambda/\epsilon)}\right)$
	controlled-$U_2$ queries
	and $\mathcal{O}(t\lambda \log{(t\lambda/\epsilon)})$ additional quantum gates.
\end{theorem}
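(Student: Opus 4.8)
The plan is to discretize the evolution into $r$ equal segments of duration $\Delta=t/r$, apply the order-$2m$ well-conditioned multiproduct formula $U_{\vec k}(\Delta)$ of \cref{Lem:ChebyshevIntegerMultiproduct} on each segment, and realize each segment by the standard linear-combination-of-unitaries (LCU) construction followed by robust oblivious amplitude amplification \cite{Berry2014Exponential}. Both $r$ and the order $m$ are left free and fixed only at the end by minimizing the total gate count; I will find that the optimum forces $\lambda\Delta=\Theta(1)$, which is what produces $r=\Theta(t\lambda)$.

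First I would bound the error of a single segment. The symmetric base formula has the BCH form $U_2(\Delta)=e^{-iH\Delta+E_3\Delta^3+E_5\Delta^5+\cdots}$, and in the worst case of maximally non-commuting terms the error operators satisfy $\|E_{2\ell+1}\|\le(c\lambda)^{2\ell+1}$ for an absolute constant $c$. Since $\vec a$ solves the Vandermonde system \cref{eq:VanderMonde}, every remainder term below order $\Delta^{2m+1}$ cancels in $U_{\vec k}(\Delta)=\sum_j a_j U_2^{k_j}(\Delta/k_j)$, and the surviving remainder is $\mathcal{O}(\Delta^{2m+1})$ with a prefactor controlled by $\|\vec a\|_1$. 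Using $\|\vec a\|_1\in\mathcal{O}(\log m)$ from \cref{Lem:ChebyshevIntegerMultiproduct}, I would establish a single-segment bound $\|U_{\vec k}(\Delta)-e^{-iH\Delta}\|\le\|\vec a\|_1(c'\lambda\Delta)^{2m+1}$. Telescoping over the $r$ segments through $\|W^r-(e^{-iH\Delta})^r\|\le r\|W-e^{-iH\Delta}\|$ for near-unitary $W$ gives a total error $\le r\|\vec a\|_1(c'\lambda t/r)^{2m+1}$; setting this to $\epsilon$ and solving for $r$ yields $r=\Theta\!\big(t\lambda\,(t\lambda/\epsilon)^{1/(2m)}\big)$, where the subpolynomial $\|\vec a\|_1$ is absorbed into the $1/(2m)$-power.

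Next I would account for the cost and optimize $m$. One invocation of the LCU applies $\mathrm{SELECT}=\sum_j|j\rangle\!\langle j|\otimes U_2^{k_j}(\Delta/k_j)$, whose controlled-$U_2$ cost is $\sum_j k_j=\|\vec k\|_1\in\mathcal{O}(m^2\log m)$, while the PREP circuits on the $\lceil\log m\rceil$-qubit index register supply the additional non-$U_2$ gates. The LCU succeeds with amplitude $\|\vec a\|_1^{-1}$, i.e. probability $\Omega(1/\log^2 m)$, which amplitude amplification boosts using $\mathcal{O}(\|\vec a\|_1)\subseteq\mathcal{O}(\log m)$ rounds; padding $\vec a$ with one dummy term so that $\|\vec a\|_1^{-1}=\sin(\pi/(2(2\ell+1)))$ renders the amplification essentially deterministic, so the only residual per-segment failure is the robust-OAA correction of size $\mathcal{O}(\Delta^{2m+1})$. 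Multiplying segments $\times$ rounds $\times$ queries-per-round gives $\mathcal{O}\!\big(t\lambda\,(t\lambda/\epsilon)^{1/(2m)}\,m^2\log^2 m\big)$ controlled-$U_2$ queries; minimizing the overhead factor $(t\lambda/\epsilon)^{1/(2m)}m^2\log^2 m$ over $m$ selects $m=\Theta(\log(t\lambda/\epsilon))$, at which $(t\lambda/\epsilon)^{1/(2m)}=\mathcal{O}(1)$ and hence $r=\Theta(t\lambda)$, collapsing the query count to $\mathcal{O}(t\lambda\log^2(t\lambda/\epsilon))$ up to $\operatorname{poly}(\log\log)$ factors and leaving $\mathcal{O}(t\lambda\log(t\lambda/\epsilon))$ additional gates. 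Finally, the per-segment failure probabilities $\mathcal{O}(\Delta^{2m+1})$ sum by a union bound to the same $\mathcal{O}(\epsilon)$ budget as the approximation error, giving overall success $1-\mathcal{O}(\epsilon)$.

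I expect the single-segment error bound to be the main obstacle, specifically proving that the BCH-derived series converges at $\lambda\Delta=\Theta(1)$ and that, after the order-$(2m-1)$ cancellation, the leading remainder's operator norm is genuinely controlled by $\|\vec a\|_1$ rather than by the much larger weighted sums $\sum_j|a_j|k_j^{-2m}$ that could silently reintroduce an exponential prefactor. This is exactly where \cref{Lem:ChebyshevIntegerMultiproduct} is indispensable: an ill-conditioned formula would cancel the same low-order terms yet inflate this remainder and the success probability, so the proof must make the logarithmic dependence on $\|\vec a\|_1$ explicit throughout. A secondary care point is the interplay between the non-unitarity of the exact $U_{\vec k}(\Delta)$ and the robustness guarantee of oblivious amplitude amplification, which must be tracked so that the two $\mathcal{O}(\Delta^{2m+1})$ error sources --- approximation and amplification --- add across segments rather than compound.
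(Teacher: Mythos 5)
Your overall architecture is the same as the paper's: a per-step remainder bound, triangle inequality over the $\|\vec a\|_1$-weighted combination, accumulation over $r$ steps, LCU with a coefficient state and select operator, robust oblivious amplitude amplification at multiplicative cost $\mathcal{O}(\|\vec a\|_1)$, and a final optimization over the order $m$. Two points, one reassurance and one genuine quantitative gap.

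The reassurance: the step you flag as the main obstacle is handled in the paper by directly citing the standard product-formula remainder bound $\|\mathcal{R}_{2m}[U_2^{j}(\Delta/j)-e^{-iH\Delta}]\|\le \frac{2|\Delta\lambda|^{2m+1}}{(2m+1)!}e^{|\Delta\lambda|}$ from \cite{Berry2007Efficient,Childs2017Speedup}; this bound is uniform in $j$, so the triangle inequality immediately yields a prefactor of exactly $\|\vec a\|_1$. Your worry that the weighted sum $\sum_j |a_j| k_j^{-2m}$ could be ``much larger'' is backwards --- since $k_j\ge 1$ it is at most $\|\vec a\|_1$ --- so nothing exponential can re-enter there.

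The gap: your single-segment bound $\|\vec a\|_1 (c'\lambda\Delta)^{2m+1}$ discards the $1/(2m+1)!$ factor, and this is precisely what forces you to take $m=\Theta(\log(t\lambda/\epsilon))$ to kill the $(t\lambda/\epsilon)^{1/(2m)}$ overhead. With that choice the per-step cost $\|\vec a\|_1\|\vec k\|_1\in\mathcal{O}(m^2\log^2 m)$ becomes $\mathcal{O}(\log^2(t\lambda/\epsilon)\,(\log\log(t\lambda/\epsilon))^2)$, so you only reach the theorem's bound up to $\operatorname{poly}(\log\log)$ factors, as you yourself concede. The paper closes exactly this gap by keeping the factorial: with Stirling's formula and the choice $2m=e^{z}$, $z=W(\log(t\lambda/\epsilon))$ (so $ze^{z}=\log(t\lambda/\epsilon)$ and $m\in\Theta(\log(t\lambda/\epsilon)/\log\log(t\lambda/\epsilon))$), the condition $(2m+1)!\gtrsim t\lambda/\epsilon$ already holds, $r\in\Theta(t\lambda)$, and $\|\vec a\|_1\|\vec k\|_1\in\Theta(z^2e^{2z})=\Theta(\log^2(t\lambda/\epsilon))$ with no residual $\log\log$ factors. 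So your argument is correct in structure but proves a slightly weaker statement; reinstating the $1/(2m+1)!$ in the remainder and re-optimizing $m$ is the missing step needed to land exactly on $\mathcal{O}(t\lambda\log^2(t\lambda/\epsilon))$ queries and $\mathcal{O}(t\lambda\log(t\lambda/\epsilon))$ additional gates.
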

\begin{proof}
The basic idea is to bound the error of a single multiproduct step, followed by varying the multiproduct order sub-logarithmically with time and error. 
We begin with the remainder of a product formula~\cite{Berry2007Efficient,Childs2017Speedup}
\begin{align}
\label{eq:remainder_error}
\left\|\mathcal{R}_{2m}\big[U_2^j(\Delta/j)-e^{-iH\Delta}\big]\right\|\le \frac{2|\Delta\lambda|^{2m+1}}{(2m+1)!}e^{|\Delta\lambda|}.
\end{align}
%
By a triangle inequality, the error
\begin{align}
\label{eq:mpf_remainder_error}
\left\|U_{\vec{k}}(\Delta)-e^{-iH\Delta}\right\|\le \frac{2\|\vec{a}\|_1|\Delta\lambda|^{2m+1}}{(2m+1)!}e^{|\Delta\lambda|}=\epsilon_{\Delta},
\end{align}
of a single multiproduct step accumulates after $r=t/\Delta$ steps to
\begin{align}
\left\|U^{r}_{\vec{k}}(t/r)-e^{-iHt}\right\|\le \epsilon_{t/r} r(1+\epsilon_{t/r})^{r-1}\le\epsilon,
\end{align} 
which is at most $\epsilon\le 1$ with the choice 
\begin{align}
r = t\lambda \max\left\{\left(\frac{8t\lambda\|\vec{a}\|_1}{\epsilon(2m+1)!}\right)^{1/(2m)} , \frac{1}{\log{(2)}}\right\}.
\end{align}

The cost of $U^{r}_{\vec{k}}(t/r)$ is then $r\|\vec{k}\|_1$ queries to the base product formula $U_2$.
This expression is simplified using Stirling's formula $\frac{1}{(2m+1)!}\in\Theta\left({e^{e^z-ze^z-z/2}}\right)$ where $2m=e^{z}$.
We scale the order with $z=W(\log{(t\lambda/\epsilon)})$ using the Lambert-$W$ function which satisfies,  by definition, $z e^{z}=\log{(t\lambda/\epsilon)}$.
Using the well-conditioned multiproduct formulas of~\cref{Lem:ChebyshevIntegerMultiproduct} where $\|\vec{a}\|_1\in\mathcal{O}(\log{(m)})\subseteq\mathcal{O}(z)$ and $\|\vec{k}\|_1\in\Theta(m^2\log{(m)})\subseteq\Theta(ze^{2z})$,
the number of steps is $r\in\Theta(t\lambda)$.
Thus at most $r\|\vec{k}\|_1\in\Theta(t\lambda)\cdot\Theta(z e^{2z})$ queries to $U_2$ suffice to approximate $e^{-iHt}$.

The linear combination of unitaries quantum circuit implements a single multiproduct step.
This uses a coefficient state $\ket{a}=\sum_{j=1}^m\sqrt{a_j}\ket{j}$ and a unitary selector $S=\sum^m_{j=1}\ketbra{j}{j}\otimes (\mathrm{sign[a_j]})U^{k_j}(\Delta/k_j)$.
These apply the multiproduct formula $(\bra{a}\otimes I)S(\ket{a}\otimes I)=U_{\vec{k}}(\Delta)/\|\vec{a}\|_1$ with success probability $1/\|\vec{a}\|^2_1$.
The controlled-$U_2$ query cost of this step is $\|\vec{k}\|_1$ due to $S$, and the gate cost is $\mathcal{O}(m)$ due the dimension of $\ket{a}$.

A quantum circuit implements multiple multiproduct steps with high probability using robust oblivious amplitude amplification~\cite{Berry2015Truncated,Gilyen2018singular}, which boosts the success probability of each step to $1-\mathcal{O}(\epsilon/r)$ at a multiplicative cost of $\mathcal{O}(\|\vec{a}\|_1)$. 
This also increases the error by at most an absolute constant.
Thus the overall success probability of applying $r$ steps is $1-\mathcal{O}(\epsilon)$, with a total $U_2$ query cost of $r\|\vec{a}\|_1\|\vec{k}\|_1\in\mathcal{O}(t\lambda z^2 e^{2z})\subseteq\mathcal{O}(t\lambda\log^{2}(t\lambda/\epsilon))$, and a total additional gate cost of $r\|\vec{a}\|_1 m\in\mathcal{O}(t\lambda z e^{z})\subseteq\mathcal{O}(t\lambda\log(t\lambda/\epsilon))$ that is sub-dominant.
\end{proof}
Note that the query cost may be reduced by an additional factor of $z\in\mathcal{O}(\log\log{(t/\lambda/\epsilon)})$ using more specialized query models outlined in~\cref{sec:circuit_optimization}.

\ssection{Cost dependence on Hamiltonian term commutators} As multiproduct formulas use product formulas as their base sequence, their costs also inherit an explicit scaling with the commutators of Hamiltonian terms, as captured by the following.
\begin{theorem}
	\label{Thm:MPF_Commutator_error}(Commutator dependence of multiproduct formulas)
	The error $U_{\vec{k}}(\Delta)e^{i\Delta H} - I$ of an order $2m$ multiproduct formula $U_{\vec{k}}$ with $U_\alpha$ as the base sequence depends on Hamiltonian terms that all occur in commutators $[h_{j_\beta},[\cdots,[h_{j_2},h_{j_1}]\cdots]]$ nested to depth $\beta>\alpha$.
\end{theorem}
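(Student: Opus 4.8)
The plan is to reduce the commutator structure of the multiproduct error to that of a \emph{single} product-formula step, and then read the latter off from a Baker--Campbell--Hausdorff (BCH) expansion. The starting observation is that the top row of the Vandermonde system~\cref{eq:VanderMonde} enforces $\sum_{j=1}^{M}a_j=1$, which lets me subtract the bulk evolution term by term,
\begin{align}
U_{\vec k}(\Delta)e^{i\Delta H}-I=\sum_{j=1}^{M}a_j\left(U_\alpha^{k_j}(\Delta/k_j)\,e^{i\Delta H}-I\right).\nonumber
\end{align}
It then suffices to show that each single-step error $U_\alpha^{k_j}(\Delta/k_j)e^{i\Delta H}-I$ is assembled entirely from nested commutators $[h_{j_\beta},[\cdots,[h_{j_2},h_{j_1}]\cdots]]$ of depth $\beta>\alpha$; the linear combination manifestly inherits this property.

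For the single-step analysis I would proceed by Lie-algebraic bookkeeping. A symmetric order-$\alpha$ base formula has the BCH expansion $U_\alpha(\delta)=\exp\!\big(-iH\delta+\sum_{\ell\ge\alpha+1}E_\ell\,\delta^{\ell}\big)$, in which the symmetry $U_\alpha(\delta)U_\alpha(-\delta)=I$ keeps only odd powers $\ell$ and fixes the leading error at order $\alpha+1$, and in which each $E_\ell$ is (by the free-Lie-algebra/Dynkin description) a linear combination of nested commutators of exactly $\ell$ Hamiltonian terms, hence of depth $\ell\ge\alpha+1$. Raising to the $k_j$-th power merely multiplies the exponent,
\begin{align}
U_\alpha^{k_j}(\Delta/k_j)=\exp\!\big(-iH\Delta+\tilde X_j\big),\qquad \tilde X_j=\sum_{\ell\ge\alpha+1}\frac{E_\ell\,\Delta^{\ell}}{k_j^{\,\ell-1}},\nonumber
\end{align}
so the depth-$\ell$ structure of each $E_\ell$ is untouched and only scalar coefficients are rescaled. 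Composing with $e^{i\Delta H}=e^{Y}$, $Y=iH\Delta$, and writing $U_\alpha^{k_j}(\Delta/k_j)e^{i\Delta H}=e^{Z_j}$ through BCH, the bulk terms $\mp iH\Delta$ cancel since $[-iH\Delta,iH\Delta]=0$; every surviving contribution to $Z_j$ is then either a depth-$(\ge\alpha+1)$ term inherited directly from $\tilde X_j$, or an iterated bracket that necessarily contains at least one factor of $\tilde X_j$ and thus has depth $\ge\alpha+2$. Consequently $Z_j$, and therefore $e^{Z_j}-I=\sum_{n\ge1}Z_j^{\,n}/n!$, is a sum of products of nested commutators each of depth $>\alpha$.

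Assembling the two steps establishes the claim order by order in $\Delta$. I expect the main obstacle to lie in the structural input rather than in the composition: one must argue rigorously that the $E_\ell$ are genuinely homogeneous depth-$\ell$ elements of the free Lie algebra generated by $\{-ih_j\}$, so that no stray term of depth $\le\alpha$ can appear, using that $U_\alpha$ is a product of exponentials of the $-ih_j$ and that symmetry eliminates the even powers. The remaining work is routine bookkeeping of the nesting depth through the iterated BCH brackets, together with truncating the formal series at finite order and controlling the tail in operator norm exactly as in~\cref{eq:remainder_error}, so that the order-by-order commutator statement upgrades to a genuine norm statement.
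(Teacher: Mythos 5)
Your argument is correct, but it reaches the conclusion by a genuinely different route than the paper. Both proofs begin the same way, reducing the multiproduct error to the single--step errors $U_\alpha^{k_j}(\Delta/k_j)e^{i\Delta H}-I$ (your use of $\sum_j a_j=1$ from the first row of \cref{eq:VanderMonde} is exactly the reduction the paper makes implicitly). The divergence is in the single--step analysis: the paper realizes $W(T)=U_\alpha^{k}(\Delta/k)e^{i\Delta H}$ as a time--ordered exponential of a piecewise--constant generator \cref{eq:pf_Hamiltonian} and invokes Lam's decomposition theorem, which expresses $W(T)$ as sums of products of explicitly integral--defined nested commutators $C_m$; the order condition $U_\alpha(\Delta)e^{i\Delta H}-I\in\mathcal{O}(\Delta^{\alpha+1})$ then forces $C_1=\cdots=C_\alpha=0$ directly, order by order, without ever needing to know that the BCH error operators are Lie elements. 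You instead work in the free Lie algebra: you use that each homogeneous component $E_\ell$ of $\log U_\alpha$ is a degree--$\ell$ Lie element (Dynkin/BCH), that powers merely rescale the exponent, and that a second BCH composition with $e^{i\Delta H}$ kills all brackets built purely from $H$ and leaves only terms containing a depth--$(\ge\alpha+1)$ factor. Your route is more self--contained given that the paper already quotes the BCH form of symmetric formulas, and it cleanly explains \emph{why} the low--order commutators vanish (they are absent from the exponent to begin with); its one nontrivial structural input is the homogeneity/Lie--element property of the $E_\ell$, which you correctly flag as the point requiring care. The paper's route buys explicit integral representations of the surviving commutators $C_m$ (useful for quantitative error bounds, as the conclusion suggests) and sidesteps the free--Lie--algebra bookkeeping by deducing the vanishing of $C_1,\ldots,C_\alpha$ from the order condition alone. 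Both arguments are formal power--series statements requiring the same routine truncation and tail control to become norm statements.
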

\begin{proof}
The error of a multiproduct formula that is correct to order $2m$ is the sum of remainders of product formulas.
Thus it suffices to examine the commutator structure of $W(T)\equiv U^k_{\alpha}(\Delta/k)e^{i\Delta H}$, which, for some piece-wise constant time-dependent Hamiltonian, solves the time-dependent Schr\"{o}dinger equation
$i\frac{\partial}{\partial s}W(s)=\Delta A(s)W(s),
W(0)=I
$ 
at time $T=2kN+1$.
For instance, $U_2$ is generated by
\begin{align}
\label{eq:pf_Hamiltonian}
A_k(s)&=
\begin{cases}
-H,& s\in[0,1), \\
h_{\lfloor s\rfloor}/(2k),& s\in[1,2kN+1],\\
0,&\text{otherwise},
\end{cases}\\
h_s&\equiv h_{1+( s-1\mod{(2N)})}.\nonumber
\end{align}

Following a decomposition theorem by Lam~\cite{Lam988TimeOrdered}, any time-ordered exponential has the expansion
$W(T)=\sum_{n=0}^{\infty}\Delta^n \sum_{l=1}^{n}\sum_{m_1,\cdots,m_l>0,\|\vec{m}\|_1=n}\xi(\vec{m})C_{m_1}C_{m_2}\cdots C_{m_l}$, where 
$
\xi(\vec{m})=\prod_{i=1}^{\operatorname{dim}{\vec{m}}}\left[\sum_{j=i}^{\operatorname{dim}{\vec{m}}} m_j\right]^{-1}$ and
\begin{align}
\label{eq:Lam_Decomposition}
C_{m}
=\int_{T\ge s_m\ge\cdots\ge s_1\ge 0}\mathrm{d}\vec{s}\; 
[A(s_m),[\cdots,[A(s_2),A(s_1)]\cdots]].\nonumber
\end{align}
The advantage of this expression is the explicit dependence on nested commutators $C_{m}$ of Hamiltonian terms.
Any product formula that is correct to order $\alpha$ satisfies $U_\alpha(\Delta)e^{i\Delta H}-I\in\mathcal{O}(\Delta^{\alpha+1})$ by definition. 
This implies that $C_1=C_2=\cdots=C_\alpha=0$.
Thus $W(T)$ explicitly depends only on higher-ordered nested commutators. 
\end{proof}

\ssection{Heisenberg model benchmark}
In practice, the optimal step-size $\Delta$ is determined empirically, such as by extrapolation from smaller to larger instances~\cite{poulin2014trotter, Childs2017Speedup}.
Using the 1D Heisenberg chain $H=\sum_{j=1}^{N}(X_j X_{j+1}+Y_j Y_{j+1}+Z_j Z_{j+1})$ with periodic boundary conditions,
we numerically validate in~\cref{fig:errorPlot} the logarithmic scaling of cost with maximum allowable error $\|U^r_{\vec{k}}(t/r)-e^{-iHt}\|\le\epsilon$ and that cost with respect to system size no worse than product formula used as the base sequence.

For each maximum error threshold $\epsilon$, we minimize the cost $3r\|\vec{k}\|$, where the factor three is from oblivious amplitude amplification, over all multi-product formulas tabulated in~\cref{sec:Tabulated_Results} that are optimized for oblivious amplitude amplification.
For each choice of multiproduct formula, we apply binary search to find the maximum number of steps $r$ within the allowable error $\epsilon$.

\ssection{Conclusion}
We have constructed well-conditioned multiproduct formulas that simultaneously exploit the commutativity structure of the simulated Hamiltonian and achieve a logarithmic cost dependence on error in the worst-case.
Variations of our approach are possible and worth investigating in future work.
For instance, rigorous error bounds on the nested commutators would be of practical relevance, which also have a non-trivial dependence on the order of product formulas chosen as the base sequence.
An extension to the time-dependent case is also possible by bootstrapping off existing product formulas for time-dependent simulation~\cite{Wiebe2010}.
More broadly, continued research in capitalizing on the features of average-case Hamiltonians~\cite{Low2017USA,Low2018IntPicSim,Childs2019Lattice,Tran2019powerlaw} will be crucial to the practical realization of quantum simulation on quantum computers.

\bibliographystyle{apsrev4-1}
\nocite{apsrev41Control}
\bibliography{apsrev-control,mpf}

\begin{thebibliography}{29}%
\makeatletter
\providecommand \@ifxundefined [1]{%
 \@ifx{#1\undefined}
}%
\providecommand \@ifnum [1]{%
 \ifnum #1\expandafter \@firstoftwo
 \else \expandafter \@secondoftwo
 \fi
}%
\providecommand \@ifx [1]{%
 \ifx #1\expandafter \@firstoftwo
 \else \expandafter \@secondoftwo
 \fi
}%
\providecommand \natexlab [1]{#1}%
\providecommand \enquote  [1]{``#1''}%
\providecommand \bibnamefont  [1]{#1}%
\providecommand \bibfnamefont [1]{#1}%
\providecommand \citenamefont [1]{#1}%
\providecommand \href@noop [0]{\@secondoftwo}%
\providecommand \href [0]{\begingroup \@sanitize@url \@href}%
\providecommand \@href[1]{\@@startlink{#1}\@@href}%
\providecommand \@@href[1]{\endgroup#1\@@endlink}%
\providecommand \@sanitize@url [0]{\catcode `\\12\catcode `\$12\catcode
  `\&12\catcode `\#12\catcode `\^12\catcode `\_12\catcode `\%12\relax}%
\providecommand \@@startlink[1]{}%
\providecommand \@@endlink[0]{}%
\providecommand \url  [0]{\begingroup\@sanitize@url \@url }%
\providecommand \@url [1]{\endgroup\@href {#1}{\urlprefix }}%
\providecommand \urlprefix  [0]{URL }%
\providecommand \Eprint [0]{\href }%
\providecommand \doibase [0]{http://dx.doi.org/}%
\providecommand \selectlanguage [0]{\@gobble}%
\providecommand \bibinfo  [0]{\@secondoftwo}%
\providecommand \bibfield  [0]{\@secondoftwo}%
\providecommand \translation [1]{[#1]}%
\providecommand \BibitemOpen [0]{}%
\providecommand \bibitemStop [0]{}%
\providecommand \bibitemNoStop [0]{.\EOS\space}%
\providecommand \EOS [0]{\spacefactor3000\relax}%
\providecommand \BibitemShut  [1]{\csname bibitem#1\endcsname}%
\let\auto@bib@innerbib\@empty
\bibitem [{\citenamefont {Jordan}\ \emph {et~al.}(2012)\citenamefont {Jordan},
  \citenamefont {Lee},\ and\ \citenamefont
  {Preskill}}]{Jordan2012QuantumFieldTheory}%
  \BibitemOpen
  \bibfield  {author} {\bibinfo {author} {\bibfnamefont {S.~P.}\ \bibnamefont
  {Jordan}}, \bibinfo {author} {\bibfnamefont {K.~S.~M.}\ \bibnamefont {Lee}},
  \ and\ \bibinfo {author} {\bibfnamefont {J.}~\bibnamefont {Preskill}},\
  }\bibfield  {title} {\enquote {\bibinfo {title} {{Quantum Algorithms for
  Quantum Field Theories}},}\ }\href {\doibase 10.1126/science.1217069}
  {\bibfield  {journal} {\bibinfo  {journal} {Science}\ }\textbf {\bibinfo
  {volume} {336}},\ \bibinfo {pages} {1130} (\bibinfo {year}
  {2012})}\BibitemShut {NoStop}%
\bibitem [{\citenamefont {Reiher}\ \emph {et~al.}(2017)\citenamefont {Reiher},
  \citenamefont {Wiebe}, \citenamefont {Svore}, \citenamefont {Wecker},\ and\
  \citenamefont {Troyer}}]{Reiher2016Reaction}%
  \BibitemOpen
  \bibfield  {author} {\bibinfo {author} {\bibfnamefont {M.}~\bibnamefont
  {Reiher}}, \bibinfo {author} {\bibfnamefont {N.}~\bibnamefont {Wiebe}},
  \bibinfo {author} {\bibfnamefont {K.~M.}\ \bibnamefont {Svore}}, \bibinfo
  {author} {\bibfnamefont {D.}~\bibnamefont {Wecker}}, \ and\ \bibinfo {author}
  {\bibfnamefont {M.}~\bibnamefont {Troyer}},\ }\bibfield  {title} {\enquote
  {\bibinfo {title} {{Elucidating reaction mechanisms on quantum computers}},}\
  }\href {\doibase 10.1073/pnas.1619152114} {\bibfield  {journal} {\bibinfo
  {journal} {Proceedings of the National Academy of Sciences}\ }\textbf
  {\bibinfo {volume} {114}},\ \bibinfo {pages} {7555} (\bibinfo {year}
  {2017})}\BibitemShut {NoStop}%
\bibitem [{\citenamefont {Childs}\ \emph {et~al.}(2018)\citenamefont {Childs},
  \citenamefont {Maslov}, \citenamefont {Nam}, \citenamefont {Ross},\ and\
  \citenamefont {Su}}]{Childs2017Speedup}%
  \BibitemOpen
  \bibfield  {author} {\bibinfo {author} {\bibfnamefont {A.~M.}\ \bibnamefont
  {Childs}}, \bibinfo {author} {\bibfnamefont {D.}~\bibnamefont {Maslov}},
  \bibinfo {author} {\bibfnamefont {Y.}~\bibnamefont {Nam}}, \bibinfo {author}
  {\bibfnamefont {N.~J.}\ \bibnamefont {Ross}}, \ and\ \bibinfo {author}
  {\bibfnamefont {Y.}~\bibnamefont {Su}},\ }\bibfield  {title} {\enquote
  {\bibinfo {title} {{Toward the first quantum simulation with quantum
  speedup}},}\ }\href {\doibase 10.1073/pnas.1801723115} {\bibfield  {journal}
  {\bibinfo  {journal} {Proceedings of the National Academy of Sciences}\
  }\textbf {\bibinfo {volume} {115}},\ \bibinfo {pages} {9456} (\bibinfo {year}
  {2018})}\BibitemShut {NoStop}%
\bibitem [{\citenamefont {Poulin}\ \emph {et~al.}(2011)\citenamefont {Poulin},
  \citenamefont {Qarry}, \citenamefont {Somma},\ and\ \citenamefont
  {Verstraete}}]{Poulin2011timedependent}%
  \BibitemOpen
  \bibfield  {author} {\bibinfo {author} {\bibfnamefont {D.}~\bibnamefont
  {Poulin}}, \bibinfo {author} {\bibfnamefont {A.}~\bibnamefont {Qarry}},
  \bibinfo {author} {\bibfnamefont {R.}~\bibnamefont {Somma}}, \ and\ \bibinfo
  {author} {\bibfnamefont {F.}~\bibnamefont {Verstraete}},\ }\bibfield  {title}
  {\enquote {\bibinfo {title} {{Quantum Simulation of Time-Dependent
  Hamiltonians and the Convenient Illusion of Hilbert Space}},}\ }\href
  {\doibase 10.1103/PhysRevLett.106.170501} {\bibfield  {journal} {\bibinfo
  {journal} {Physical Review Letters}\ }\textbf {\bibinfo {volume} {106}},\
  \bibinfo {pages} {170501} (\bibinfo {year} {2011})}\BibitemShut {NoStop}%
\bibitem [{\citenamefont {Berry}\ \emph {et~al.}(2014)\citenamefont {Berry},
  \citenamefont {Childs}, \citenamefont {Cleve}, \citenamefont {Kothari},\ and\
  \citenamefont {Somma}}]{Berry2014Exponential}%
  \BibitemOpen
  \bibfield  {author} {\bibinfo {author} {\bibfnamefont {D.~W.}\ \bibnamefont
  {Berry}}, \bibinfo {author} {\bibfnamefont {A.~M.}\ \bibnamefont {Childs}},
  \bibinfo {author} {\bibfnamefont {R.}~\bibnamefont {Cleve}}, \bibinfo
  {author} {\bibfnamefont {R.}~\bibnamefont {Kothari}}, \ and\ \bibinfo
  {author} {\bibfnamefont {R.~D.}\ \bibnamefont {Somma}},\ }\bibfield  {title}
  {\enquote {\bibinfo {title} {{Exponential improvement in precision for
  simulating sparse Hamiltonians}},}\ }in\ \href {\doibase
  10.1145/2591796.2591854} {\emph {\bibinfo {booktitle} {Proceedings of the
  46th Annual ACM Symposium on Theory of Computing - STOC '14}}},\ \bibinfo
  {series and number} {STOC '14}\ (\bibinfo  {publisher} {ACM Press},\ \bibinfo
  {address} {New York, New York, USA},\ \bibinfo {year} {2014})\ pp.\ \bibinfo
  {pages} {283--292}\BibitemShut {NoStop}%
\bibitem [{\citenamefont {Berry}\ \emph
  {et~al.}(2015{\natexlab{a}})\citenamefont {Berry}, \citenamefont {Childs},
  \citenamefont {Cleve}, \citenamefont {Kothari},\ and\ \citenamefont
  {Somma}}]{Berry2015Truncated}%
  \BibitemOpen
  \bibfield  {author} {\bibinfo {author} {\bibfnamefont {D.~W.}\ \bibnamefont
  {Berry}}, \bibinfo {author} {\bibfnamefont {A.~M.}\ \bibnamefont {Childs}},
  \bibinfo {author} {\bibfnamefont {R.}~\bibnamefont {Cleve}}, \bibinfo
  {author} {\bibfnamefont {R.}~\bibnamefont {Kothari}}, \ and\ \bibinfo
  {author} {\bibfnamefont {R.~D.}\ \bibnamefont {Somma}},\ }\bibfield  {title}
  {\enquote {\bibinfo {title} {{Simulating Hamiltonian Dynamics with a
  Truncated Taylor Series}},}\ }\href {\doibase 10.1103/PhysRevLett.114.090502}
  {\bibfield  {journal} {\bibinfo  {journal} {Physical Review Letters}\
  }\textbf {\bibinfo {volume} {114}},\ \bibinfo {pages} {090502} (\bibinfo
  {year} {2015}{\natexlab{a}})}\BibitemShut {NoStop}%
\bibitem [{\citenamefont {Low}\ and\ \citenamefont
  {Chuang}(2017{\natexlab{a}})}]{Low2016HamSim}%
  \BibitemOpen
  \bibfield  {author} {\bibinfo {author} {\bibfnamefont {G.~H.}\ \bibnamefont
  {Low}}\ and\ \bibinfo {author} {\bibfnamefont {I.~L.}\ \bibnamefont
  {Chuang}},\ }\bibfield  {title} {\enquote {\bibinfo {title} {{Optimal
  Hamiltonian Simulation by Quantum Signal Processing}},}\ }\href {\doibase
  10.1103/PhysRevLett.118.010501} {\bibfield  {journal} {\bibinfo  {journal}
  {Physical Review Letters}\ }\textbf {\bibinfo {volume} {118}},\ \bibinfo
  {pages} {010501} (\bibinfo {year} {2017}{\natexlab{a}})}\BibitemShut
  {NoStop}%
\bibitem [{\citenamefont {Campbell}(2019)}]{Campbell2019Random}%
  \BibitemOpen
  \bibfield  {author} {\bibinfo {author} {\bibfnamefont {E.}~\bibnamefont
  {Campbell}},\ }\bibfield  {title} {\enquote {\bibinfo {title} {{A random
  compiler for fast Hamiltonian simulation}},}\ }\href
  {http://arxiv.org/abs/1811.08017} {\bibfield  {journal} {\bibinfo  {journal}
  {Physical Review Letters}\ }\textbf {\bibinfo {volume} {(Accepted)}}
  (\bibinfo {year} {2019})}\BibitemShut {NoStop}%
\bibitem [{\citenamefont {Low}(2019)}]{Low2018SpectralNorm}%
  \BibitemOpen
  \bibfield  {author} {\bibinfo {author} {\bibfnamefont {G.~H.}\ \bibnamefont
  {Low}},\ }\bibfield  {title} {\enquote {\bibinfo {title} {{Hamiltonian
  simulation with nearly optimal dependence on spectral norm}},}\ }in\ \href
  {\doibase 10.1145/3313276.3316386} {\emph {\bibinfo {booktitle} {Proceedings
  of the 51st Annual ACM Symposium on Theory of Computing - STOC '19}}}\
  (\bibinfo  {publisher} {ACM Press},\ \bibinfo {address} {New York, New York,
  USA},\ \bibinfo {year} {2019})\ pp.\ \bibinfo {pages} {491--502}\BibitemShut
  {NoStop}%
\bibitem [{\citenamefont {Berry}\ \emph {et~al.}(2007)\citenamefont {Berry},
  \citenamefont {Ahokas}, \citenamefont {Cleve},\ and\ \citenamefont
  {Sanders}}]{Berry2007Efficient}%
  \BibitemOpen
  \bibfield  {author} {\bibinfo {author} {\bibfnamefont {D.~W.}\ \bibnamefont
  {Berry}}, \bibinfo {author} {\bibfnamefont {G.}~\bibnamefont {Ahokas}},
  \bibinfo {author} {\bibfnamefont {R.}~\bibnamefont {Cleve}}, \ and\ \bibinfo
  {author} {\bibfnamefont {B.~C.}\ \bibnamefont {Sanders}},\ }\bibfield
  {title} {\enquote {\bibinfo {title} {{Efficient Quantum Algorithms for
  Simulating Sparse Hamiltonians}},}\ }\href {\doibase
  10.1007/s00220-006-0150-x} {\bibfield  {journal} {\bibinfo  {journal}
  {Communications in Mathematical Physics}\ }\textbf {\bibinfo {volume}
  {270}},\ \bibinfo {pages} {359} (\bibinfo {year} {2007})}\BibitemShut
  {NoStop}%
\bibitem [{\citenamefont {Low}\ and\ \citenamefont
  {Chuang}(2019)}]{Low2016Qubitization}%
  \BibitemOpen
  \bibfield  {author} {\bibinfo {author} {\bibfnamefont {G.~H.}\ \bibnamefont
  {Low}}\ and\ \bibinfo {author} {\bibfnamefont {I.~L.}\ \bibnamefont
  {Chuang}},\ }\bibfield  {title} {\enquote {\bibinfo {title} {{Hamiltonian
  Simulation by Qubitization}},}\ }\href {\doibase
  doi.org/10.22331/q-2019-07-12-163} {\bibfield  {journal} {\bibinfo  {journal}
  {Quantum}\ }\textbf {\bibinfo {volume} {3}},\ \bibinfo {pages} {163}
  (\bibinfo {year} {2019})}\BibitemShut {NoStop}%
\bibitem [{\citenamefont {Berry}\ \emph
  {et~al.}(2015{\natexlab{b}})\citenamefont {Berry}, \citenamefont {Childs},\
  and\ \citenamefont {Kothari}}]{Berry2015Hamiltonian}%
  \BibitemOpen
  \bibfield  {author} {\bibinfo {author} {\bibfnamefont {D.~W.}\ \bibnamefont
  {Berry}}, \bibinfo {author} {\bibfnamefont {A.~M.}\ \bibnamefont {Childs}}, \
  and\ \bibinfo {author} {\bibfnamefont {R.}~\bibnamefont {Kothari}},\
  }\bibfield  {title} {\enquote {\bibinfo {title} {{Hamiltonian Simulation with
  Nearly Optimal Dependence on all Parameters}},}\ }in\ \href {\doibase
  10.1109/FOCS.2015.54} {\emph {\bibinfo {booktitle} {2015 IEEE 56th Annual
  Symposium on Foundations of Computer Science}}},\ \bibinfo {series and
  number} {FOCS '15}\ (\bibinfo  {publisher} {IEEE},\ \bibinfo {address}
  {Washington, DC, USA},\ \bibinfo {year} {2015})\ pp.\ \bibinfo {pages}
  {792--809}\BibitemShut {NoStop}%
\bibitem [{\citenamefont {Trotter}(1959)}]{Trotter1959Product}%
  \BibitemOpen
  \bibfield  {author} {\bibinfo {author} {\bibfnamefont {H.~F.}\ \bibnamefont
  {Trotter}},\ }\bibfield  {title} {\enquote {\bibinfo {title} {{On the Product
  of Semi-Groups of Operators}},}\ }\href {\doibase 10.2307/2033649} {\bibfield
   {journal} {\bibinfo  {journal} {Proceedings of the American Mathematical
  Society}\ }\textbf {\bibinfo {volume} {10}},\ \bibinfo {pages} {545}
  (\bibinfo {year} {1959})}\BibitemShut {NoStop}%
\bibitem [{\citenamefont {Suzuki}(1990)}]{Suzuki1990Fractal}%
  \BibitemOpen
  \bibfield  {author} {\bibinfo {author} {\bibfnamefont {M.}~\bibnamefont
  {Suzuki}},\ }\bibfield  {title} {\enquote {\bibinfo {title} {{Fractal
  decomposition of exponential operators with applications to many-body
  theories and Monte Carlo simulations}},}\ }\href {\doibase
  10.1016/0375-9601(90)90962-N} {\bibfield  {journal} {\bibinfo  {journal}
  {Physics Letters A}\ }\textbf {\bibinfo {volume} {146}},\ \bibinfo {pages}
  {319} (\bibinfo {year} {1990})}\BibitemShut {NoStop}%
\bibitem [{\citenamefont {Childs}\ and\ \citenamefont
  {Su}(2019)}]{Childs2019Lattice}%
  \BibitemOpen
  \bibfield  {author} {\bibinfo {author} {\bibfnamefont {A.~M.}\ \bibnamefont
  {Childs}}\ and\ \bibinfo {author} {\bibfnamefont {Y.}~\bibnamefont {Su}},\
  }\bibfield  {title} {\enquote {\bibinfo {title} {{Nearly optimal lattice
  simulation by product formulas}},}\ }\href {http://arxiv.org/abs/1901.00564}
  {\bibfield  {journal} {\bibinfo  {journal} {arXiv preprint arXiv:1901.00564}\
  } (\bibinfo {year} {2019})}\BibitemShut {NoStop}%
\bibitem [{\citenamefont {Haah}\ \emph {et~al.}(2018)\citenamefont {Haah},
  \citenamefont {Hastings}, \citenamefont {Kothari},\ and\ \citenamefont
  {Low}}]{Haah2018quantum}%
  \BibitemOpen
  \bibfield  {author} {\bibinfo {author} {\bibfnamefont {J.}~\bibnamefont
  {Haah}}, \bibinfo {author} {\bibfnamefont {M.}~\bibnamefont {Hastings}},
  \bibinfo {author} {\bibfnamefont {R.}~\bibnamefont {Kothari}}, \ and\
  \bibinfo {author} {\bibfnamefont {G.~H.}\ \bibnamefont {Low}},\ }\bibfield
  {title} {\enquote {\bibinfo {title} {{Quantum Algorithm for Simulating Real
  Time Evolution of Lattice Hamiltonians}},}\ }in\ \href {\doibase
  10.1109/FOCS.2018.00041} {\emph {\bibinfo {booktitle} {2018 IEEE 59th Annual
  Symposium on Foundations of Computer Science (FOCS)}}},\ \bibinfo {series and
  number} {FOCS '18}\ (\bibinfo  {publisher} {IEEE},\ \bibinfo {address}
  {Washington, DC, USA},\ \bibinfo {year} {2018})\ pp.\ \bibinfo {pages}
  {350--360}\BibitemShut {NoStop}%
\bibitem [{\citenamefont {Tran}\ \emph {et~al.}(2019)\citenamefont {Tran},
  \citenamefont {Guo}, \citenamefont {Su}, \citenamefont {Garrison},
  \citenamefont {Eldredge}, \citenamefont {Foss-Feig}, \citenamefont {Childs},\
  and\ \citenamefont {Gorshkov}}]{Tran2019powerlaw}%
  \BibitemOpen
  \bibfield  {author} {\bibinfo {author} {\bibfnamefont {M.~C.}\ \bibnamefont
  {Tran}}, \bibinfo {author} {\bibfnamefont {A.~Y.}\ \bibnamefont {Guo}},
  \bibinfo {author} {\bibfnamefont {Y.}~\bibnamefont {Su}}, \bibinfo {author}
  {\bibfnamefont {J.~R.}\ \bibnamefont {Garrison}}, \bibinfo {author}
  {\bibfnamefont {Z.}~\bibnamefont {Eldredge}}, \bibinfo {author}
  {\bibfnamefont {M.}~\bibnamefont {Foss-Feig}}, \bibinfo {author}
  {\bibfnamefont {A.~M.}\ \bibnamefont {Childs}}, \ and\ \bibinfo {author}
  {\bibfnamefont {A.~V.}\ \bibnamefont {Gorshkov}},\ }\bibfield  {title}
  {\enquote {\bibinfo {title} {{Locality and Digital Quantum Simulation of
  Power-Law Interactions}},}\ }\href {\doibase 10.1103/PhysRevX.9.031006}
  {\bibfield  {journal} {\bibinfo  {journal} {Physical Review X}\ }\textbf
  {\bibinfo {volume} {9}},\ \bibinfo {pages} {031006} (\bibinfo {year}
  {2019})}\BibitemShut {NoStop}%
\bibitem [{\citenamefont {Poulin}\ \emph {et~al.}(2015)\citenamefont {Poulin},
  \citenamefont {Hastings}, \citenamefont {Wecker}, \citenamefont {Wiebe},
  \citenamefont {Doherty},\ and\ \citenamefont {Troyer}}]{poulin2014trotter}%
  \BibitemOpen
  \bibfield  {author} {\bibinfo {author} {\bibfnamefont {D.}~\bibnamefont
  {Poulin}}, \bibinfo {author} {\bibfnamefont {M.~B.}\ \bibnamefont
  {Hastings}}, \bibinfo {author} {\bibfnamefont {D.}~\bibnamefont {Wecker}},
  \bibinfo {author} {\bibfnamefont {N.}~\bibnamefont {Wiebe}}, \bibinfo
  {author} {\bibfnamefont {A.~C.}\ \bibnamefont {Doherty}}, \ and\ \bibinfo
  {author} {\bibfnamefont {M.}~\bibnamefont {Troyer}},\ }\bibfield  {title}
  {\enquote {\bibinfo {title} {{The Trotter step size required for accurate
  quantum simulation of quantum chemistry}},}\ }\href
  {http://dl.acm.org/citation.cfm?id=2871401.2871402} {\bibfield  {journal}
  {\bibinfo  {journal} {Quantum Information \& Computation}\ }\textbf {\bibinfo
  {volume} {15}},\ \bibinfo {pages} {361} (\bibinfo {year} {2015})}\BibitemShut
  {NoStop}%
\bibitem [{\citenamefont {Somma}(2016)}]{Somma2016Trotter}%
  \BibitemOpen
  \bibfield  {author} {\bibinfo {author} {\bibfnamefont {R.~D.}\ \bibnamefont
  {Somma}},\ }\bibfield  {title} {\enquote {\bibinfo {title} {{A Trotter-Suzuki
  approximation for Lie groups with applications to Hamiltonian simulation}},}\
  }\href {\doibase 10.1063/1.4952761} {\bibfield  {journal} {\bibinfo
  {journal} {Journal of Mathematical Physics}\ }\textbf {\bibinfo {volume}
  {57}},\ \bibinfo {pages} {062202} (\bibinfo {year} {2016})}\BibitemShut
  {NoStop}%
\bibitem [{\citenamefont {Chin}(2010)}]{Chin2010Multiproduct}%
  \BibitemOpen
  \bibfield  {author} {\bibinfo {author} {\bibfnamefont {S.~A.}\ \bibnamefont
  {Chin}},\ }\bibfield  {title} {\enquote {\bibinfo {title} {{Multi-product
  splitting and Runge-Kutta-Nystr{\"{o}}m integrators}},}\ }\href {\doibase
  10.1007/s10569-010-9255-9} {\bibfield  {journal} {\bibinfo  {journal}
  {Celestial Mechanics and Dynamical Astronomy}\ }\textbf {\bibinfo {volume}
  {106}},\ \bibinfo {pages} {391} (\bibinfo {year} {2010})}\BibitemShut
  {NoStop}%
\bibitem [{\citenamefont {Childs}\ and\ \citenamefont
  {Wiebe}(2012)}]{Childs2012}%
  \BibitemOpen
  \bibfield  {author} {\bibinfo {author} {\bibfnamefont {A.~M.}\ \bibnamefont
  {Childs}}\ and\ \bibinfo {author} {\bibfnamefont {N.}~\bibnamefont {Wiebe}},\
  }\bibfield  {title} {\enquote {\bibinfo {title} {{Hamiltonian Simulation
  Using Linear Combinations of Unitary Operations}},}\ }\href
  {http://dl.acm.org/citation.cfm?id=2481569.2481570} {\bibfield  {journal}
  {\bibinfo  {journal} {Quantum Information \& Computation}\ }\textbf {\bibinfo
  {volume} {12}},\ \bibinfo {pages} {901} (\bibinfo {year} {2012})}\BibitemShut
  {NoStop}%
\bibitem [{\citenamefont {Lloyd}(1996)}]{Lloyd1996universal}%
  \BibitemOpen
  \bibfield  {author} {\bibinfo {author} {\bibfnamefont {S.}~\bibnamefont
  {Lloyd}},\ }\bibfield  {title} {\enquote {\bibinfo {title} {{Universal
  Quantum Simulators}},}\ }\href {\doibase 10.1126/science.273.5278.1073}
  {\bibfield  {journal} {\bibinfo  {journal} {Science}\ }\textbf {\bibinfo
  {volume} {273}},\ \bibinfo {pages} {1073} (\bibinfo {year}
  {1996})}\BibitemShut {NoStop}%
\bibitem [{\citenamefont {Blanes}\ \emph {et~al.}(2000)\citenamefont {Blanes},
  \citenamefont {Casas},\ and\ \citenamefont {Ros}}]{Blanes2000Magnus}%
  \BibitemOpen
  \bibfield  {author} {\bibinfo {author} {\bibfnamefont {S.}~\bibnamefont
  {Blanes}}, \bibinfo {author} {\bibfnamefont {F.}~\bibnamefont {Casas}}, \
  and\ \bibinfo {author} {\bibfnamefont {J.}~\bibnamefont {Ros}},\ }\bibfield
  {title} {\enquote {\bibinfo {title} {{Improved High Order Integrators Based
  on the Magnus Expansion}},}\ }\href {\doibase 10.1023/A:1022311628317}
  {\bibfield  {journal} {\bibinfo  {journal} {BIT Numerical Mathematics}\
  }\textbf {\bibinfo {volume} {40}},\ \bibinfo {pages} {434} (\bibinfo {year}
  {2000})}\BibitemShut {NoStop}%
\bibitem [{\citenamefont {Gily{\'{e}}n}\ \emph {et~al.}(2019)\citenamefont
  {Gily{\'{e}}n}, \citenamefont {Su}, \citenamefont {Low},\ and\ \citenamefont
  {Wiebe}}]{Gilyen2018singular}%
  \BibitemOpen
  \bibfield  {author} {\bibinfo {author} {\bibfnamefont {A.}~\bibnamefont
  {Gily{\'{e}}n}}, \bibinfo {author} {\bibfnamefont {Y.}~\bibnamefont {Su}},
  \bibinfo {author} {\bibfnamefont {G.~H.}\ \bibnamefont {Low}}, \ and\
  \bibinfo {author} {\bibfnamefont {N.}~\bibnamefont {Wiebe}},\ }\bibfield
  {title} {\enquote {\bibinfo {title} {{Quantum singular value transformation
  and beyond: exponential improvements for quantum matrix arithmetics}},}\ }in\
  \href {\doibase 10.1145/3313276.3316366} {\emph {\bibinfo {booktitle}
  {Proceedings of the 51st Annual ACM Symposium on Theory of Computing - STOC
  '19}}}\ (\bibinfo  {publisher} {ACM Press},\ \bibinfo {address} {New York,
  New York, USA},\ \bibinfo {year} {2019})\ pp.\ \bibinfo {pages}
  {193--204}\BibitemShut {NoStop}%
\bibitem [{\citenamefont {Rudelson}\ and\ \citenamefont
  {Vershynin}(2006)}]{Rudelson2006Sparse}%
  \BibitemOpen
  \bibfield  {author} {\bibinfo {author} {\bibfnamefont {M.}~\bibnamefont
  {Rudelson}}\ and\ \bibinfo {author} {\bibfnamefont {R.}~\bibnamefont
  {Vershynin}},\ }\bibfield  {title} {\enquote {\bibinfo {title} {{Sparse
  reconstruction by convex relaxation: Fourier and Gaussian measurements}},}\
  }in\ \href {\doibase 10.1109/CISS.2006.286463} {\emph {\bibinfo {booktitle}
  {2006 40th Annual Conference on Information Sciences and Systems}}}\
  (\bibinfo  {publisher} {IEEE},\ \bibinfo {year} {2006})\ pp.\ \bibinfo
  {pages} {207--212}\BibitemShut {NoStop}%
\bibitem [{\citenamefont {Lam}(1998)}]{Lam988TimeOrdered}%
  \BibitemOpen
  \bibfield  {author} {\bibinfo {author} {\bibfnamefont {C.~S.}\ \bibnamefont
  {Lam}},\ }\bibfield  {title} {\enquote {\bibinfo {title} {{Decomposition of
  time-ordered products and path-ordered exponentials}},}\ }\href {\doibase
  10.1063/1.532550} {\bibfield  {journal} {\bibinfo  {journal} {Journal of
  Mathematical Physics}\ }\textbf {\bibinfo {volume} {39}},\ \bibinfo {pages}
  {5543} (\bibinfo {year} {1998})}\BibitemShut {NoStop}%
\bibitem [{\citenamefont {Wiebe}\ \emph {et~al.}(2010)\citenamefont {Wiebe},
  \citenamefont {Berry}, \citenamefont {H{\o}yer},\ and\ \citenamefont
  {Sanders}}]{Wiebe2010}%
  \BibitemOpen
  \bibfield  {author} {\bibinfo {author} {\bibfnamefont {N.}~\bibnamefont
  {Wiebe}}, \bibinfo {author} {\bibfnamefont {D.}~\bibnamefont {Berry}},
  \bibinfo {author} {\bibfnamefont {P.}~\bibnamefont {H{\o}yer}}, \ and\
  \bibinfo {author} {\bibfnamefont {B.~C.}\ \bibnamefont {Sanders}},\
  }\bibfield  {title} {\enquote {\bibinfo {title} {{Higher order decompositions
  of ordered operator exponentials}},}\ }\href {\doibase
  10.1088/1751-8113/43/6/065203} {\bibfield  {journal} {\bibinfo  {journal}
  {Journal of Physics A: Mathematical and Theoretical}\ }\textbf {\bibinfo
  {volume} {43}},\ \bibinfo {pages} {065203} (\bibinfo {year}
  {2010})}\BibitemShut {NoStop}%
\bibitem [{\citenamefont {Low}\ and\ \citenamefont
  {Chuang}(2017{\natexlab{b}})}]{Low2017USA}%
  \BibitemOpen
  \bibfield  {author} {\bibinfo {author} {\bibfnamefont {G.~H.}\ \bibnamefont
  {Low}}\ and\ \bibinfo {author} {\bibfnamefont {I.~L.}\ \bibnamefont
  {Chuang}},\ }\bibfield  {title} {\enquote {\bibinfo {title} {{Hamiltonian
  Simulation by Uniform Spectral Amplification}},}\ }\href
  {https://arxiv.org/abs/1707.05391} {\bibfield  {journal} {\bibinfo  {journal}
  {arXiv preprint arXiv:1707.05391}\ } (\bibinfo {year}
  {2017}{\natexlab{b}})}\BibitemShut {NoStop}%
\bibitem [{\citenamefont {Low}\ and\ \citenamefont
  {Wiebe}(2018)}]{Low2018IntPicSim}%
  \BibitemOpen
  \bibfield  {author} {\bibinfo {author} {\bibfnamefont {G.~H.}\ \bibnamefont
  {Low}}\ and\ \bibinfo {author} {\bibfnamefont {N.}~\bibnamefont {Wiebe}},\
  }\bibfield  {title} {\enquote {\bibinfo {title} {{Hamiltonian Simulation in
  the Interaction Picture}},}\ }\href {https://arxiv.org/abs/1805.00675}
  {\bibfield  {journal} {\bibinfo  {journal} {arXiv preprint arXiv:1805.00675}\
  } (\bibinfo {year} {2018})}\BibitemShut {NoStop}%
\end{thebibliography}%
\onecolumngrid
\appendix

\section{Examples of optimal multi-product formulas}
\label{sec:Tabulated_Results}
In this section, we tabulate coefficients for multiproduct formulas.
We provide optimized solutions where the base sequence is either a symmetric second order product formula in~\cref{Tabl:LPOAASolutionsTwo}, or a symmetric fourth order product formula in~\cref{Tabl:LPOAASolutionsFour}.
Symmetric order-$\alpha$ product formulas may be formally expressed as
$U_\alpha(\Delta)= e^{-iH\Delta +  E_{\alpha+1}\Delta^{\alpha+1} + E_{\alpha+3} \Delta^{\alpha+3}+ \cdots}$ for some error matrices $E_k$~\cite{Blanes2000Magnus}. This implies the Taylor expansion
\begin{align}
U^{k_j}_\alpha(\Delta/{k_j}) &= e^{-iH\Delta}
+ \frac{\Delta^{\alpha+1}}{{k_j}^\alpha}\tilde{E}_{\alpha+1}(\Delta) + \frac{\Delta^{\alpha+3}}{{k_j}^{\alpha+2}}\tilde{E}_{\alpha+3}(\Delta) + \cdots.\nonumber
\end{align}
Thus all error terms of order $2m$ and below cancel in the linear combination
\begin{align}
U_{\vec{k}}(\Delta)=\sum^{M}_{j=1}a_{{k_j}} U_\alpha^{k_{k_j}}\left(\frac{\Delta}{k_{k_j}}\right) = e^{-iH\Delta}+\mathcal{O}(\Delta^{2m+1}),
\end{align}
if the coefficients $a_j$ and exponents $k_j$ satisfy the following under-determined system of linear equations.
\begin{equation}
\begin{bmatrix}
1 & 1&\cdots &1\\
k_1^{-\alpha} & k_2^{-\alpha} & \cdots & k_M^{-\alpha}\\
k_1^{-\alpha-2} & k_2^{-\alpha-2} & \cdots & k_M^{-\alpha-2}\\
\vdots & \vdots & \ddots &\vdots\\
k_1^{-2m+2} & k_2^{-2m+2} & \cdots & k_M^{-2m+2}\\
\end{bmatrix}
\begin{bmatrix}a_{1}\\a_{2}\\a_{3}\\\vdots\\a_{M} \end{bmatrix}
\begin{bmatrix}1\\ 0 \\ 0 \\ \vdots\\0 \end{bmatrix}
.\label{eq:genVanderMonde}
\end{equation}

These coefficients are optimized through the linear program of~\cref{eq:OptimizationProblem} to minimize the number of queries to the base sequence, which is captured by the product $\|\vec{a}\|_1\|\vec{k}\|_1$, where $\vec{k}$ only contains the $k_j$ exponents that correspond to non-zero $a_j$ coefficients in~\cref{eq:MultiProduct}. 
Use of oblivious amplitude amplification~\cite{Berry2015Truncated,Gilyen2018singular} requires rounding $\|\vec{a}\|_1$ up to the smallest value $n$ that satisfies 
\begin{align}
n=\operatorname{argmin}_{1\le y\le \|\vec{a}\|_1}\left\lceil\frac{\pi}{4 \sin^{-1}(1/y)}-\frac{1}{2}\right\rceil\in\mathbb{Z}_{\mathrm{odd}},
\end{align}
For all cases that we tabulate in~\cref{sec:Tabulated_Results}, it turns out that $\|\vec{a}\|_1$ is small enough that $n$ is always $3$. 
Thus we also provide solutions that minimize $\|\vec{k}\|_1$ for the largest $\|\vec{a}\|_1\le2$.

\begin{table*}[t]
	\begin{tabularx}{\linewidth}{c|c|c|>{\raggedright\arraybackslash}X}
		$m$ & $\|\vec{a}\|_1$ & $\|\vec{k}\|_1$ & Non-zero coefficients of optimized multi-product formulas $U_{\vec{k}}(\Delta)=\sum^{M}_{j=1}a_{j} U_2^{k_j}\left(\frac{\Delta}{k_j}\right) = e^{-iH\Delta}+\mathcal{O}(\Delta^{2m+1})$ \\
		\hline\hline
		2 & 1.667 & 3 & $\vec{k}=(1,2)$, $\vec{a}=(${\tiny$\frac{-1}{3}$, $\frac{4}{3}$}$)$ \\
		\hline
		3 & 1.333 & 9 & $\vec{k}=(1,2,6)$, $\vec{a}=(${\tiny$\frac{1}{105}$, $\frac{-1}{6}$, $\frac{81}{70}$}$)$ \\
		\hline
		4 & 1.401 & 16 & $\vec{k}=(1,2,3,10)$, $\vec{a}=(${\tiny$\frac{-1}{2376}$, $\frac{2}{45}$, $\frac{-729}{3640}$, $\frac{31250}{27027}$}$)$ \\
		\hline
		5 & 1.373 & 28 & $\vec{k}=(1,2,3,5,17)$, $\vec{a}=(${\tiny$\frac{1}{165888}$, $\frac{-256}{89775}$, $\frac{6561}{179200}$, $\frac{-390625}{2128896}$, $\frac{6975757441}{6067353600}$}$)$ \\
		\hline
		6 & 1.530 & 37 & $\vec{k}=(1,2,3,4,6,21)$, $\vec{a}=(${\tiny$\frac{-1}{5544000}$, $\frac{8}{19665}$, $\frac{-81}{4480}$, $\frac{65536}{669375}$, $\frac{-216}{875}$, $\frac{7626831723}{6537520000}$}$)$ \\
		\hline
		7 & 1.365 & 58 & $\vec{k}=(1,2,3,4,5,9,34)$, $\vec{a}=(${\tiny$\frac{1}{798336000}$, $\frac{-8}{654885}$, $\frac{59049}{41108480}$, $\frac{-1048576}{52518375}$, $\frac{244140625}{4596673536}$, $\frac{-31381059609}{192832640000}$, $\frac{4660977897838088}{4131462743533125}$}$)$ \\
		\hline
		8 & 1.372 & 78 & $\vec{k}=(1,2,3,4,5,6,12,45)$, $\vec{a}=(${\tiny$\frac{-1}{87524236800}$, $\frac{32}{66844575}$, $\frac{-729}{5017600}$, $\frac{131072}{28477575}$, $\frac{-48828125}{1520031744}$, $\frac{23328}{425425}$, $\frac{-95551488}{622396775}$, $\frac{1532278301220703125}{1360389650333249536}$}$)$ \\
		\hline
		9 & 1.357 & 102 & $\vec{k}=(1,2,3,4,5,6,8,15,58)$, $\vec{a}=(${\tiny$\frac{1}{14351497574400}$, $\frac{-4}{328930875}$, $\frac{59049}{6613376000}$, $\frac{-4194304}{7439025825}$, $\frac{6103515625}{831680898048}$, $\frac{-59049}{2452450}$, $\frac{274877906944}{5654031508125}$, $\frac{-360406494140625}{2342511781722112}$, $\frac{250246473680347348787521}{222930340909804639361250}$}$)$ \\
		\hline
		10 & 1.359 & 128 & $\vec{k}=(1,2,3,4,5,6,7,10,18,72)$, $\vec{a}=(${\tiny$\frac{-1}{2405702668723200}$, $\frac{1}{3304192500}$, $\frac{-177147}{328182400000}$, $\frac{16777216}{244314672525}$, $\frac{-152587890625}{88665552847872}$, $\frac{177147}{14314300}$, $\frac{-1628413597910449}{64065702729600000}$, $\frac{152587890625}{3090381882588}$, $\frac{-7625597484987}{50102940387500}$, $\frac{33537732413930512368795648}{30010892586441560158990625}$}$)$ \\
		\hline
		11 & 1.358 & 158 & $\vec{k}=(1,2,3,4,5,6,7,8,12,22,88)$, $\vec{a}=(${\tiny$\frac{1}{489053083097779200}$, $\frac{-4}{648001265625}$, $\frac{4782969}{181060880000000}$, $\frac{-524288}{79303299075}$, $\frac{95367431640625}{315052006795029504}$, $\frac{-9565938}{2362935575}$, $\frac{79792266297612001}{4243214845584000000}$, $\frac{-137438953472}{5028288890625}$, $\frac{2507653251072}{57111976796875}$, $\frac{-11119834626984462962}{75338562178830234375}$, $\frac{764149216957226040350612652032}{684739190213840837873006015625}$}$)$ \\
		\hline
		12 & 1.350 & 193 & $\vec{k}=(1,2,3,4,5,6,7,8,10,14,27,106)$, $\vec{a}=(${\tiny$\frac{-1}{144390239142589440000}$, $\frac{1}{11687273325000}$, $\frac{-129140163}{150638992824320000}$, $\frac{1073741824}{2652237835025625}$, $\frac{-95367431640625}{3025491514793459712}$, $\frac{129140163}{188838650000}$, $\frac{-79792266297612001}{15530574349255680000}$, $\frac{4503599627370496}{352223792657611875}$, $\frac{-95367431640625}{4105747222248672}$, $\frac{79792266297612001}{1838945428407945000}$, $\frac{-381520424476945831628649898809}{2599557522032734399084748800000}$, $\frac{85915027059992607611268303810877410409}{76826458670327827099700739675297300000}$}$)$ \\
		\hline
		13 & 1.376 & 224 & $\vec{k}=(1,2,3,4,5,6,7,8,9,11,16,31,121)$, $\vec{a}=(${\tiny$\frac{1}{31462070283141120000000}$, $\frac{-8192}{5071269939762151125}$, $\frac{1162261467}{30384094532599808000}$, $\frac{-68719476736}{1985545575685546875}$, $\frac{59604644775390625}{12695130964224456523776}$, $\frac{-9521245937664}{55318072575390625}$, $\frac{191581231380566414401}{84690757033018785792000}$, $\frac{-288230376151711744}{24270307981988693625}$, $\frac{984770902183611232881}{43760580453990400000000}$, $\frac{-81402749386839761113321}{3009720957024337920000000}$, $\frac{4835703278458516698824704}{112513856873765905761328125}$, $\frac{-620412660965527688188300451573157121}{4173437971930764175429651660800000000}$, $\frac{801795320536133573571931534665380233173841533961}{715886828276024991553383459008280526848000000000}$}$)$ \\
		\hline
		14 & 1.343 & 271 & $\vec{k}=(1,2,3,4,5,6,7,8,9,10,13,19,37,147)$, $\vec{a}=(${\tiny$\frac{-1}{12947955743587345367040000}$, $\frac{4096}{255968570807777953125}$, $\frac{-387420489}{439666742394880000000}$, $\frac{4398046511104}{2954011765908438860625}$, $\frac{-59604644775390625}{176199152307255474388992}$, $\frac{793437161472}{40323113771565625}$, $\frac{-191581231380566414401}{473539419700199424000000}$, $\frac{295147905179352825856}{85657194918952954171875}$, $\frac{-8862938119652501095929}{703740150628892016640000}$, $\frac{122070312500000000000}{7299214348532926488669}$, $\frac{-91733330193268616658399616009}{4911660455320856706416640000000}$, $\frac{1768453418076865701195582595329481}{47461833812963668327615011225600000}$, $\frac{-59325966985223687799599734398071581327609}{424325455586783326123540037039554560000000}$, $\frac{696762206271866268428168706860580089445450671440761}{625438255717644276228669818265124841896017920000000}$}$)$ \\
		\hline
		15 & 1.340 & 316 & $\vec{k}=(1,2,3,4,5,6,7,8,9,10,12,15,22,42,170)$, $\vec{a}=(${\tiny$\frac{1}{5708934616140416641204224000}$, $\frac{-1}{6792655122878625000}$, $\frac{4782969}{256012756679680000000}$, $\frac{-8589934592}{147809328285209716125}$, $\frac{476837158203125}{22035963278560349454336}$, $\frac{-4782969}{2477563088000}$, $\frac{1341068619663964900807}{22848100216623931392000000}$, $\frac{-18446744073709551616}{25299400178786092734375}$, $\frac{8862938119652501095929}{2254956509146578485248000}$, $\frac{-476837158203125}{58708966584812544}$, $\frac{4565043429507072}{332784274650953125}$, $\frac{-253410816192626953125}{11861507876217629966336}$, $\frac{144209936106499234037676064081}{3575238879657630656278080000000}$, $\frac{-712698848302837170849772887}{5102848721743078683520000000}$, $\frac{13518854368623590892663053210288544178009033203125}{12158976319252260175611999331938261282704299352064}$}$)$ \\
		\hline\hline
		3 & 1.889 & 7 & $\vec{k}=(1,2,4)$, $\vec{a}=(${\tiny$\frac{1}{45}$, $\frac{-4}{9}$, $\frac{64}{45}$}$)$ \\
		\hline
		4 & 1.913 & 13 & $\vec{k}=(1,2,3,7)$, $\vec{a}=(${\tiny$\frac{-1}{1152}$, $\frac{64}{675}$, $\frac{-729}{1600}$, $\frac{117649}{86400}$}$)$ \\
		\hline
		5 & 1.826 & 23 & $\vec{k}=(1,2,3,5,12)$, $\vec{a}=(${\tiny$\frac{1}{82368}$, $\frac{-64}{11025}$, $\frac{243}{3200}$, $\frac{-390625}{959616}$, $\frac{3981312}{2977975}$}$)$ \\
		\hline
		6 & 1.972 & 32 & $\vec{k}=(1,2,3,4,6,16)$, $\vec{a}=(${\tiny$\frac{-1}{3213000}$, $\frac{2}{2835}$, $\frac{-2187}{69160}$, $\frac{4096}{23625}$, $\frac{-4374}{9625}$, $\frac{4294967296}{3273645375}$}$)$ \\
		\hline
		7 & 1.966 & 46 & $\vec{k}=(1,2,3,4,5,9,22)$, $\vec{a}=(${\tiny$\frac{1}{333849600}$, $\frac{-32}{1091475}$, $\frac{59049}{17024000}$, $\frac{-1048576}{21560175}$, $\frac{244140625}{1865493504}$, $\frac{-31381059609}{72289817600}$, $\frac{100429708055072}{74479301134875}$}$)$ \\
		\hline
		8 & 1.979 & 61 & $\vec{k}=(1,2,3,4,5,6,11,29)$, $\vec{a}=(${\tiny$\frac{-1}{30481920000}$, $\frac{128}{92542905}$, $\frac{-177147}{417464320}$, $\frac{16777216}{1227909375}$, $\frac{-6103515625}{62538448896}$, $\frac{22674816}{131718125}$, $\frac{-379749833583241}{970056622080000}$, $\frac{297558232675799463481}{228244622609203200000}$}$)$ \\
		\hline
		9 & 1.961 & 80 & $\vec{k}=(1,2,3,4,5,6,8,14,37)$, $\vec{a}=(${\tiny$\frac{1}{5082098112000}$, $\frac{-2}{58046625}$, $\frac{1594323}{62664448000}$, $\frac{-4194304}{2589134625}$, $\frac{152587890625}{7155594141696}$, $\frac{-6377292}{89810875}$, $\frac{274877906944}{1833170464125}$, $\frac{-66465861139202}{162925121341125}$, $\frac{12337511914217166362274241}{9423091227263641095168000}$}$)$ \\
		\hline
		10 & 1.960 & 102 & $\vec{k}=(1,2,3,4,5,6,7,10,18,46)$, $\vec{a}=(${\tiny$\frac{-1}{981682644096000}$, $\frac{1}{1347192000}$, $\frac{-1594323}{1202574464000}$, $\frac{67108864}{397105891875}$, $\frac{-152587890625}{35937133360128}$, $\frac{1594323}{52052000}$, $\frac{-1628413597910449}{25788472744320000}$, $\frac{152587890625}{1225454342112}$, $\frac{-1853020188851841}{4489223458720000}$, $\frac{3244150909895248285300369}{2448694306609618747680000}$}$)$ \\
		\hline
		11 & 1.965 & 126 & $\vec{k}=(1,2,3,4,5,6,7,8,12,22,56)$, $\vec{a}=(${\tiny$\frac{1}{198008706639744000}$, $\frac{-4}{262214465625}$, $\frac{4782969}{73196816000000}$, $\frac{-524288}{32016859875}$, $\frac{95367431640625}{126975876815563776}$, $\frac{-9565938}{950324375}$, $\frac{232630513987207}{4962824380800000}$, $\frac{-68719476736}{1005657778125}$, $\frac{2507653251072}{22484083496875}$, $\frac{-1345499989865120018402}{3329964448304296359375}$, $\frac{15986247194027594038116352}{12080430539009984869306875}$}$)$ \\
		\hline
		12 & 1.991 & 152 & $\vec{k}=(1,2,3,4,5,6,7,8,10,14,26,66)$, $\vec{a}=(${\tiny$\frac{-1}{51894975564432000000}$, $\frac{1}{4197360384000}$, $\frac{-43046721}{18010899530624000}$, $\frac{268435456}{237412593928125}$, $\frac{-95367431640625}{1080806878190527488}$, $\frac{43046721}{22422400000}$, $\frac{-79792266297612001}{5513246098918272000}$, $\frac{1125899906842624}{31132733992234875}$, $\frac{-95367431640625}{1436932748579328}$, $\frac{79792266297612001}{624032382916800000}$, $\frac{-3211838877954855105157369}{7744161043187653392000000}$, $\frac{3504121441488212268351778470441}{2635228428724696794547667200000}$}$)$ \\
		\hline
		13 & 1.982 & 180 & $\vec{k}=(1,2,3,4,5,6,7,8,9,11,16,30,78)$, $\vec{a}=(${\tiny$\frac{1}{12242003387236761600000}$, $\frac{-1}{240754472083125}$, $\frac{59049}{599844044800000}$, $\frac{-4294967296}{48164829960121875}$, $\frac{476837158203125}{39357888848075685888}$, $\frac{-12754584}{28663009375}$, $\frac{191581231380566414401}{32683848606724423680000}$, $\frac{-18014398509481984}{583844880799805625}$, $\frac{12157665459056928801}{207303317917696000000}$, $\frac{-9849732675807611094711841}{138697287360986537164800000}$, $\frac{302231454903657293676544}{2602510298491324924828125}$, $\frac{-84470272064208984375}{217380860126701862591}$, $\frac{64103685379672408817299868943378}{48924902981004808782903378259375}$}$)$ \\
		\hline
		14 & 1.978 & 213 & $\vec{k}=(1,2,3,4,5,6,7,8,9,10,13,19,35,91)$, $\vec{a}=(${\tiny$\frac{-1}{4439277486085452595200000}$, $\frac{4096}{87717923176556413125}$, $\frac{-10460353203}{4064796621689323520000}$, $\frac{4398046511104}{1010347097562618440625}$, $\frac{-2384185791015625}{2407041263119343026176}$, $\frac{21422803359744}{371150155673171875}$, $\frac{-558545864083284007}{469625870777057280000}$, $\frac{295147905179352825856}{29063796548915485231875}$, $\frac{-79766443076872509863361}{2142768079487107072000000}$, $\frac{976562500000000000}{19689884915517097281}$, $\frac{-542800770374370512771595361}{9676704180632135600701440000}$, $\frac{1768453418076865701195582595329481}{15163688298677075616718651392000000}$, $\frac{-1864347957889074348926544189453125}{4726418762579442113057948568649728}$, $\frac{303179125313825004671598231974400041446691527}{230988684060471966545405562424513265664000000}$}$)$ \\
		\hline
		15 & 1.996 & 248 & $\vec{k}=(1,2,3,4,5,6,7,8,9,10,12,15,22,40,104)$, $\vec{a}=(${\tiny$\frac{1}{1937737434439902268293120000}$, $\frac{-8}{18438181244486128125}$, $\frac{129140163}{2344018344571904000000}$, $\frac{-536870912}{3130127505257878125}$, $\frac{2384185791015625}{37292604904429619576832}$, $\frac{-258280326}{45224496191875}$, $\frac{459986536544739960976801}{2645000331324645158092800000}$, $\frac{-1125899906842624}{520219729019615625}$, $\frac{79766443076872509863361}{6823033117248825917440000}$, $\frac{-95367431640625}{3938334372013041}$, $\frac{23110532361879552}{561942053651103125}$, $\frac{-4105255222320556640625}{63409238245639889354752}$, $\frac{144209936106499234037676064081}{1133393565815662450671009421875}$, $\frac{-53687091200000000000000000000}{131927979694795165375304717907}$, $\frac{8727375299152525849208662144232535364890263552}{6622476769886717511303522653639563434013753125}$}$)$ \\
		\hline
	\end{tabularx}
	\caption{\label{Tabl:LPOAASolutionsTwo}Multi-product solutions to~\cref{eq:MultiProduct} using a symmetric second-order product formula as the base sequence, where $\vec{k}$ only contains the $k_j$ exponents that correspond to non-zero $a_j$ coefficients, that (top half) minimize $\|\vec{a}\|_1\|\vec{k}\|_1$, and (bottom half) minimize $\|\vec{k}\|_1$ such that $\|\vec a\|_1\le 2$.}
\end{table*}

\begin{table*}[t]
	\begin{tabularx}{\linewidth}{c|c|c|>{\raggedright\arraybackslash}X}
		$m$ & $\|\vec{a}\|_1$ & $\|\vec{k}\|_1$ & Non-zero coefficients of optimized multi-product formulas $U_{\vec{k}}(\Delta)=\sum^{M}_{j=1}a_{j} U_4^{k_j}\left(\frac{\Delta}{k_j}\right) = e^{-iH\Delta}+\mathcal{O}(\Delta^{2m+1})$ \\
		\hline\hline
		3 & 1.133 & 3 & $\vec{k}=(1,2)$, $\vec{a}=\big(${\tiny$\frac{-1}{15}$, $\frac{16}{15}$}$\big)$ \\
		\hline
		4 & 1.169 & 7 & $\vec{k}=(1,2,4)$, $\vec{a}=\big(${\tiny$\frac{1}{945}$, $\frac{-16}{189}$, $\frac{1024}{945}$}$\big)$ \\
		\hline
		5 & 1.130 & 13 & $\vec{k}=(1,2,3,7)$, $\vec{a}=\big(${\tiny$\frac{-1}{72576}$, $\frac{256}{42525}$, $\frac{-729}{11200}$, $\frac{823543}{777600}$}$\big)$ \\
		\hline
		6 & 1.153 & 20 & $\vec{k}=(1,2,3,4,10)$, $\vec{a}=\big(${\tiny$\frac{1}{4633200}$, $\frac{-4}{8775}$, $\frac{59049}{3312400}$, $\frac{-32768}{429975}$, $\frac{7812500}{7378371}$}$\big)$ \\
		\hline
		7 & 1.162 & 29 & $\vec{k}=(1,2,3,4,5,14)$, $\vec{a}=\big(${\tiny$\frac{-1}{422884800}$, $\frac{16}{711585}$, $\frac{-531441}{210277760}$, $\frac{1048576}{32021325}$, $\frac{-244140625}{3115034496}$, $\frac{221460595216}{211290425775}$}$\big)$ \\
		\hline
		8 & 1.146 & 41 & $\vec{k}=(1,2,3,4,5,7,19)$, $\vec{a}=\big(${\tiny$\frac{1}{69424128000}$, $\frac{-4096}{7059362625}$, $\frac{1594323}{9777152000}$, $\frac{-67108864}{15008560875}$, $\frac{1220703125}{54428516352}$, $\frac{-678223072849}{9927650304000}$, $\frac{799006685782884121}{760815408697344000}$}$\big)$ \\
		\hline
		9 & 1.132 & 55 & $\vec{k}=(1,2,3,4,5,6,9,25)$, $\vec{a}=\big(${\tiny$\frac{-1}{12031358976000}$, $\frac{512}{36014090805}$, $\frac{-177147}{17595719680}$, $\frac{268435456}{447210547875}$, $\frac{-6103515625}{855151976448}$, $\frac{90699264}{4518292625}$, $\frac{-22876792454961}{388865404928000}$, $\frac{931322574615478515625}{890959478088566734848}$}$\big)$ \\
		\hline
		10 & 1.136 & 70 & $\vec{k}=(1,2,3,4,5,6,7,11,31)$, $\vec{a}=\big(${\tiny$\frac{1}{2043368570880000}$, $\frac{-1024}{2909265333975}$, $\frac{14348907}{23348779417600}$, $\frac{-2147483648}{28359549343125}$, $\frac{3814697265625}{2103852280578048}$, $\frac{-14693280768}{1202197871875}$, $\frac{1628413597910449}{71380702376755200}$, $\frac{-5559917313492231481}{99574372143267840000}$, $\frac{699053619999045038539170241}{669907499211984479846400000}$}$\big)$ \\
		\hline
		11 & 1.124 & 89 & $\vec{k}=(1,2,3,4,5,6,7,9,13,39)$, $\vec{a}=\big(${\tiny$\frac{-1}{566665233039360000}$, $\frac{8192}{1566265058733375}$, $\frac{-177147}{8182300672000}$, $\frac{68719476736}{13379472211730625}$, $\frac{-95367431640625}{441712566705586176}$, $\frac{1451188224}{583039331875}$, $\frac{-1628413597910449}{196103758675968000}$, $\frac{5559060566555523}{257415179141120000}$, $\frac{-8650415919381337933}{161394976296271872000}$, $\frac{1532395228870645870817151}{1476256552793711575040000}$}$\big)$ \\
		\hline
		12 & 1.129 & 108 & $\vec{k}=(1,2,3,4,5,6,7,8,10,16,46)$, $\vec{a}=\big(${\tiny$\frac{1}{130657826528570880000}$, $\frac{-1}{10646268979500}$, $\frac{387420489}{416359454038528000}$, $\frac{-67108864}{155258089284375}$, $\frac{95367431640625}{2897576688177819648}$, $\frac{-387420489}{558615557500}$, $\frac{3909821048582988049}{779185106411272704000}$, $\frac{-70368744177664}{5916663985357125}$, $\frac{95367431640625}{5138570770721388}$, $\frac{-295147905179352825856}{5697428575450811574375}$, $\frac{907846434775996175406740561329}{872274131863076152425452737500}$}$\big)$ \\
		\hline
		13 & 1.124 & 131 & $\vec{k}=(1,2,3,4,5,6,7,8,9,12,19,55)$, $\vec{a}=\big(${\tiny$\frac{-1}{43447863763854950400000}$, $\frac{8192}{7073068018875271875}$, $\frac{-43046721}{1596721705779200000}$, $\frac{8589934592}{361249657415521875}$, $\frac{-95367431640625}{30638789854214750208}$, $\frac{352638738432}{3245980816203125}$, $\frac{-27368747340080916343}{20527267517536665600000}$, $\frac{1152921504606846976}{182086598692188815625}$, $\frac{-109418989131512359209}{10740621348071014400000}$, $\frac{369768517790072832}{23616211610169303125}$, $\frac{-4898762930960846817716295277921}{97524264824097994835656704000000}$, $\frac{939343707638512715789016819000244140625}{903512885077539117948496998119881310208}$}$\big)$ \\
		\hline
		14 & 1.119 & 156 & $\vec{k}=(1,2,3,4,5,6,7,8,9,10,14,22,65)$, $\vec{a}=\big(${\tiny$\frac{1}{15242285039483761459200000}$, $\frac{-1}{74028867154790625}$, $\frac{10460353203}{14214237399025664000000}$, $\frac{-137438953472}{112277799175720303125}$, $\frac{476837158203125}{1751356624892462235648}$, $\frac{-10460353203}{679322825431250}$, $\frac{191581231380566414401}{627798629880211046400000}$, $\frac{-9223372036854775808}{3725527047551068134375}$, $\frac{79766443076872509863361}{9418308093415584563200000}$, $\frac{-95367431640625}{9332044267838292}$, $\frac{191581231380566414401}{13787827585583083621875}$, $\frac{-1191817653772720942460132761}{25447746446924438066105812500}$, $\frac{8748372096373426118698083496952056884765625}{8437250966399640675632741317018303623856128}$}$\big)$ \\
		\hline
		15 & 1.123 & 182 & $\vec{k}=(1,2,3,4,5,6,7,8,9,10,12,16,25,74)$, $\vec{a}=\big(${\tiny$\frac{-1}{6212990716825347366912000000}$, $\frac{16}{118790275159464418125}$, $\frac{-3486784401}{205492433943689052160000}$, $\frac{4294967296}{82237356069233990625}$, $\frac{-59604644775390625}{3108738756535924985561088}$, $\frac{13947137604}{8303784373259375}$, $\frac{-459986536544739960976801}{9218005522325675570626560000}$, $\frac{2305843009213693952}{3829570149292051539375}$, $\frac{-717897987691852588770249}{229502829147119976448000000}$, $\frac{59604644775390625}{9672474243773408067}$, $\frac{-14975624970497949696}{1683122287482811800625}$, $\frac{618970019642690137449562112}{37197162160941289276160015625}$, $\frac{-2220446049250313080847263336181640625}{44975293463351494115118116335062614016}$, $\frac{81217248802771228597652036390959994837496721}{78240999048919289336803622198030389875515625}$}$\big)$ \\
		\hline\hline
		4 & 1.610 & 6 & $\vec{k}=(1,2,3)$, $\vec{a}=\big(${\tiny$\frac{1}{336}$, $\frac{-32}{105}$, $\frac{729}{560}$}$\big)$ \\
		\hline
		5 & 1.526 & 11 & $\vec{k}=(1,2,3,5)$, $\vec{a}=\big(${\tiny$\frac{-1}{22464}$, $\frac{256}{12285}$, $\frac{-2187}{8320}$, $\frac{390625}{314496}$}$\big)$ \\
		\hline
		6 & 1.642 & 17 & $\vec{k}=(1,2,3,4,7)$, $\vec{a}=\big(${\tiny$\frac{1}{1365120}$, $\frac{-256}{159975}$, $\frac{59049}{884800}$, $\frac{-262144}{821205}$, $\frac{282475249}{225244800}$}$\big)$ \\
		\hline
		7 & 1.908 & 24 & $\vec{k}=(1,2,3,4,5,9)$, $\vec{a}=\big(${\tiny$\frac{-1}{94003200}$, $\frac{128}{1237005}$, $\frac{-59049}{4874240}$, $\frac{524288}{3132675}$, $\frac{-244140625}{552738816}$, $\frac{31381059609}{24395571200}$}$\big)$ \\
		\hline
		8 & 1.887 & 34 & $\vec{k}=(1,2,3,4,5,7,12)$, $\vec{a}=\big(${\tiny$\frac{1}{14707630080}$, $\frac{-128}{46139625}$, $\frac{177147}{222208000}$, $\frac{-65536}{2900205}$, $\frac{6103515625}{51404709888}$, $\frac{-678223072849}{1612182528000}$, $\frac{11609505792}{8770136375}$}$\big)$ \\
		\hline
		9 & 1.974 & 44 & $\vec{k}=(1,2,3,4,5,6,8,15)$, $\vec{a}=\big(${\tiny$\frac{-1}{1621638144000}$, $\frac{32}{297604125}$, $\frac{-59049}{749056000}$, $\frac{4194304}{844333875}$, $\frac{-1220703125}{18930143232}$, $\frac{1889568}{8960875}$, $\frac{-274877906944}{651070294875}$, $\frac{72081298828125}{56715799748608}$}$\big)$ \\
		\hline
		10 & 1.905 & 58 & $\vec{k}=(1,2,3,4,5,6,7,11,19)$, $\vec{a}=\big(${\tiny$\frac{1}{390029230080000}$, $\frac{-1024}{552406511475}$, $\frac{14348907}{4394293657600}$, $\frac{-2147483648}{5269941455625}$, $\frac{3814697265625}{384412809166848}$, $\frac{-14693280768}{214999159375}$, $\frac{1628413597910449}{12429674378035200}$, $\frac{-5559917313492231481}{14481005254410240000}$, $\frac{104127350297911241532841}{79380043802925465600000}$}$\big)$ \\
		\hline
		11 & 1.946 & 73 & $\vec{k}=(1,2,3,4,5,6,7,9,13,23)$, $\vec{a}=\big(${\tiny$\frac{-1}{94661139234816000}$, $\frac{8192}{260671653309375}$, $\frac{-14348907}{109614202880000}$, $\frac{68719476736}{2193181992626625}$, $\frac{-95367431640625}{71563821261520896}$, $\frac{117546246144}{7539771113875}$, $\frac{-79792266297612001}{1506850787819520000}$, $\frac{150094635296999121}{1039844174200832000}$, $\frac{-19004963774880799438801}{45404688739675668480000}$, $\frac{1716155831334586342923895201}{1307159711825499154022400000}$}$\big)$ \\
		\hline
		12 & 1.985 & 88 & $\vec{k}=(1,2,3,4,5,6,7,8,10,15,27)$, $\vec{a}=\big(${\tiny$\frac{1}{18572005645575782400}$, $\frac{-128}{192858045778125}$, $\frac{4782969}{722108907520000}$, $\frac{-8589934592}{2762256589663725}$, $\frac{3814697265625}{15882764789219328}$, $\frac{-306110016}{59389755425}$, $\frac{3909821048582988049}{102458013930455040000}$, $\frac{-36028797018963968}{386999438625174375}$, $\frac{48828125000000}{309209171790519}$, $\frac{-3649115753173828125}{9252770793176694784}$, $\frac{42391158275216203514294433201}{32701084707427617351270400000}$}$\big)$ \\
		\hline
		13 & 1.974 & 107 & $\vec{k}=(1,2,3,4,5,6,7,8,9,12,18,32)$, $\vec{a}=\big(${\tiny$\frac{-1}{6142641675702884352000}$, $\frac{32}{3894861002484375}$, $\frac{-4782969}{24887553152000000}$, $\frac{134217728}{786303703560375}$, $\frac{-59604644775390625}{2643142970256498597888}$, $\frac{306110016}{384438179125}$, $\frac{-191581231380566414401}{19327329750643200000000}$, $\frac{4503599627370496}{94033072774265625}$, $\frac{-450283905890997363}{5723189781557248000}$, $\frac{641959232274432}{4838745927640625}$, $\frac{-14409084988511915616}{36169625809113671875}$, $\frac{1267650600228229401496703205376}{970913591804648156179749609375}$}$\big)$ \\
		\hline
		14 & 1.917 & 128 & $\vec{k}=(1,2,3,4,5,6,7,8,9,10,14,21,38)$, $\vec{a}=\big(${\tiny$\frac{1}{2211237782923205836800000}$, $\frac{-1}{10718298389019375}$, $\frac{43046721}{8441043283312640000}$, $\frac{-137438953472}{16125888405767765625}$, $\frac{59604644775390625}{31248973295537455890432}$, $\frac{-14348907}{131994362500}$, $\frac{3909821048582988049}{1797927650111324160000}$, $\frac{-9223372036854775808}{517009533455642101875}$, $\frac{984770902183611232881}{15925711282492211200000}$, $\frac{-59604644775390625}{786670162680239838}$, $\frac{3909821048582988049}{34564324896752990625}$, $\frac{-56101658612759777297212443}{153735289078666600448000000}$, $\frac{1768453418076865701195582595329481}{1379974635261707514524145253102500}$}$\big)$ \\
		\hline
		15 & 1.972 & 150 & $\vec{k}=(1,2,3,4,5,6,7,8,9,10,12,16,24,43)$, $\vec{a}=\big(${\tiny$\frac{-1}{900823728009896570880000000}$, $\frac{128}{137582627676102984375}$, $\frac{-14348907}{122123339706368000000}$, $\frac{1073741824}{2958609985221515625}$, $\frac{-298023223876953125}{2226595086494836377255936}$, $\frac{2754990144}{233628817296875}$, $\frac{-459986536544739960976801}{1303994601146388602880000000}$, $\frac{18014398509481984}{4198223988849515625}$, $\frac{-26588814358957503287787}{1181569860211220480000000}$, $\frac{19073486328125000000}{425602110643296646059}$, $\frac{-46221064723759104}{695475103338640625}$, $\frac{38685626227668133590597632}{286811608006618857584296875}$, $\frac{-258486928873495606591488}{651874549770874351796875}$, $\frac{5459046029871041534743397034507955551606171601}{4193316609494557038531498092653135994880000000}$}$\big)$ \\
		\hline
	\end{tabularx}
	\caption{\label{Tabl:LPOAASolutionsFour}Multi-product solutions to~\cref{eq:MultiProduct} using a symmetric fourth-order product formula as the base sequence,, where $\vec{k}$ only contains the $k_j$ exponents that correspond to non-zero $a_j$ coefficients, that (top half) minimize $\|\vec{a}\|_1\|\vec{k}\|_1$, and (bottom half) minimize $\|\vec{k}\|_1$ such that $\|\vec a\|_1\le 2$.}
\end{table*}

\section{Multiproduct circuit optimizations}
\label{sec:circuit_optimization}
Given a multiproduct formula $U_{\vec{k}}(\Delta)=\sum^{M}_{j=1}a_{j} U_2^{k_j}\left(\frac{\Delta}{k_j}\right)$ from~\cref{eq:MultiProduct},
we expressed cost in terms of a sum of queries to the base product formula.  
This contributes a multiplicative factor of $\|\vec{k}\|_1=k_1+k_2+\cdots+k_M$. 
However, it should be recognized that this factor is a worst-case bound.
In this section, we briefly discuss a query model where this multiplicative factor is reduced from  $\|\vec{k}\|_1$ to $\max_j k_j$.
Though this corresponds to improving the query complexity of~\cref{Thm:HamSimByMPF} by only a sub-logarithmic factor, this improvement could nevertheless be situationally advantageous.

For a Hamiltonian $H=h_1+h_2\cdots+h_N$ with $N$ terms, let us define the programmable rotation
\begin{align}
\operatorname{Prog}_k(\Delta)=\sum_{p=0}^P\ketbra{p}{p}\otimes e^{-ip\Delta h_k/P}.
\end{align}
Controlled on a binary number $p\in[0,P]$, this applies time evolution by $\Delta$ scaled by a fraction $p/P$.
Using $N$ programmable rotations in sequence, this allows us to synthesize the programmable product formula
\begin{align}
\operatorname{ProgPF}(\Delta)=\sum_{p=0}^P\ketbra{p}{p}\otimes U_2(p\Delta/ P).
\end{align}
We may then synthesize
\begin{align}
\operatorname{ProgMPF}(\Delta)=\sum_{j=1}^M\ketbra{j}{j}\otimes\sum_{p=0}^P\ketbra{p}{p}\otimes U^{k_j}_2(p\Delta/ P).
\end{align}
Note that this uses $\max_j k_j$ queries to the programmable product formula rather than $\|\vec{k}\|_1$. 
As the exponents $k_j$ are known beforehand, we may create a circuit where controlled on $\ket{j}$, we simply do not apply the remaining $M-k_j$ programmable product formulas.

Thus a single-step of the multiproduct formula 
\begin{align}
(\bra{G}\otimes I)\cdot \operatorname{ProgMPF}(\Delta)\cdot (\ket{G'}\otimes I)=\frac{1}{\|\vec{a}\|_1}\sum^{M}_{j=1}a_{j} U_2^{k_j}\left(\frac{\Delta}{k_j}\right),\nonumber
\end{align}
is approximated to error $\epsilon$ by using the states 
\begin{align}
\ket{G}&=\sum_{j=1}^M\sqrt{a_j}\ket{j}\ket{P/k_j},\\\nonumber
\ket{G'}&=\sum_{j=1}^M\sqrt{-a_j}\ket{j}\ket{P/k_j},
\end{align}
and choosing $P\in\mathcal{O}(NM\Delta/\epsilon)$.

\end{document}